\RequirePackage[l2tabu, orthodox]{nag}
\documentclass{article}

\usepackage[T1]{fontenc}

\usepackage[bitstream-charter]{mathdesign}
\usepackage{bbm}
\usepackage{amsmath}
\usepackage[scaled=0.92]{PTSans}
\usepackage{csquotes}

\usepackage[
  paper  = letterpaper,
  left   = 1.2in,
  right  = 1.2in,
  top    = 1.0in,
  bottom = 1.0in,
  ]{geometry}

\usepackage[dvipsnames]{xcolor}
\definecolor{shadecolor}{gray}{0.9}

\usepackage[final,expansion=alltext]{microtype}
\usepackage[parfill]{parskip}
\usepackage[english]{babel}

\usepackage{graphicx}
\usepackage[labelfont=bf]{caption}
\usepackage[format=hang]{subcaption}
\usepackage{sidecap}

\usepackage{booktabs}
\usepackage{tabularx}

\usepackage{authblk}

\usepackage[round, sort]{natbib}

\usepackage[colorlinks,linktoc=all]{hyperref}
\usepackage[all]{hypcap}
\hypersetup{citecolor=BrickRed}
\hypersetup{linkcolor=black}
\hypersetup{urlcolor=MidnightBlue}

\usepackage{tcolorbox}

\usepackage{listings}

\usepackage{amsthm}
\newtheorem{theorem}{Theorem}[section]
\newtheorem{lemma}[theorem]{Lemma}
\newtheorem{proposition}[theorem]{Proposition}

\newtheorem{definition}[theorem]{Definition}
\theoremstyle{definition}

\newtheorem{assumption}[theorem]{Assumption}

\usepackage{enumitem}

\usepackage{tikz} 

\usepackage{pifont}

\newcommand{\RR}{\mathbbmss{R}} 
\newcommand{\EE}[1]{\mathbbmss{E}\left[#1\right]} 
\newcommand{\PP}[1]{\text{Pr}\left[#1\right]} 
\newcommand{\VV}[1]{\mathbbmss{V}\left[#1\right]} 
\newcommand{\one}{\mathbbmss 1} 
\newcommand{\Normal}[2]{\mathcal{N}\left(#1,#2\right)} 
\newcommand{\pto}{\overset{p}{\rightarrow}} 
\newcommand{\dto}{\leadsto} 

\newcommand{\data}{\mathcal{D}}

\newcommand{\indep}{\perp \!\!\!\! \perp}
\newcommand{\empEE}[2]{\hat{\mathbbmss{E}}^{#1}\left[#2\right]} 
\newcommand{\Ltwo}{\mathbbmss{L}_2} 
\newcommand{\norm}[1]{\left\|#1\right\|_{\Ltwo}} 

\newcommand{\thetatarget}{\theta^\star}

\newcommand{\pitarget}{\pi^\star}
\newcommand{\mutarget}{\mu^\star}
\newcommand{\gammatarget}{\gamma^\star}

\newcommand{\FOCAL}{\texttt{FOCaL}}

\title{\textbf{A Doubly Robust Machine Learning Approach for Disentangling Treatment Effect Heterogeneity with Functional Outcomes}}

\author[1,2*]{Filippo Salmaso}
\author[1,3*]{Lorenzo Testa}
\author[1,4$\#$]{Francesca Chiaromonte}

\affil[1]{L'EMbeDS, Sant'Anna School of Advanced Studies, Pisa, Italy}
\affil[2]{School of Pharmaceutical Science, University of Geneva, Genève, Switzerland}
\affil[3]{Department of Statistics \& Data Science, Carnegie Mellon University, Pittsburgh PA, US}
\affil[4]{Department of Statistics, Penn State University, University Park PA, US}

\affil[*]{\small{These authors contributed equally}}
\affil[$\#$]{\small{Corresponding author}}

\date{\vspace{-0.1em}
\quad \texttt{filippo.salmaso@unige.ch} 
\quad \texttt{lorenzo@stat.cmu.edu} 
\quad \texttt{fxc11@psu.edu}
\\ 
\vspace{1em}\today}

\begin{document}

\maketitle

\begin{abstract}
Causal inference is paramount for understanding the effects of interventions, yet extracting personalized insights from increasingly complex data remains a significant challenge for modern machine learning. This is the case, in particular, when considering functional outcomes observed over a continuous domain (e.g.,~time, or space). Estimation of heterogeneous treatment effects, known as CATE, has emerged as a crucial tool for personalized decision-making, but existing meta-learning frameworks are largely limited to scalar outcomes, failing to provide satisfying results in scientific applications that leverage the rich, continuous information encoded in functional data. Here, we introduce \FOCAL{} (Functional Outcome Causal Learning), a novel, doubly robust meta-learner specifically engineered to estimate a \textit{functional}  heterogeneous treatment effect (F-CATE). \FOCAL{} integrates advanced functional regression techniques for both outcome modeling and functional pseudo-outcome reconstruction, thereby enabling the direct and robust estimation of F-CATE. We provide a rigorous theoretical derivation of \FOCAL{}, demonstrate its performance and robustness compared to existing non-robust functional methods through comprehensive simulation studies, and illustrate its practical utility on diverse real-world functional datasets. \FOCAL{} advances the capabilities of machine intelligence to infer nuanced, individualized causal effects from complex data, paving the way for more precise and trustworthy AI systems in personalized medicine, adaptive policy design, and fundamental scientific discovery.
\end{abstract}

\section{Introduction}

Causal inference has become a cornerstone of modern statistics and machine learning, providing a formal language and a rigorous toolkit for moving beyond associative relationships to answer substantial questions about the effects of interventions in a variety of fields, including medicine \citep{prosperi2020causal}, genomics \citep{du2025causal}, economics \citep{varian2016causal}, and the social sciences \citep{imbens2024causal}. Early developments in this field \citep{rubin1974estimating, rubin2005causal} established the so-called \textit{counterfactual} framework for estimating the \textit{Average Treatment Effect} (ATE), which measures the mean effect of a treatment across an entire population. 

Although the ATE is a valuable summary 
estimand, it often masks significant underlying variation. In many real-world scenarios, the effect of a treatment is not uniform but varies across individuals with different characteristics. For example, a new drug may be highly effective for one patient subgroup but have little to no effect on another. This phenomenon, known as \textit{treatment effect heterogeneity}, has become a critical frontier in causal inference. Understanding ``what works for whom'' is essential, e.g., for personalized medicine and targeted policy implementation \citep{xu2024optimal}. The central statistical target for quantifying this heterogeneity is the \textit{Conditional Average Treatment Effect} (CATE), which measures the average treatment effect for subpopulations defined by a specific set of covariates. Estimating the CATE allows one to map how the treatment effect changes across the covariate space.

The challenge of estimating the CATE has been tackled through the development of effective and flexible machine learning methods known as \textit{meta-learners} \citep{molak2023causal, caron2022estimating, wager2024causal}. These algorithms leverage the predictive power of modern machine learning, typically by reframing the causal question as a sequence of supervised learning problems. Using non-parametric models such as random forests, gradient boosting, or neural networks one can capture complex relationships in the data without imposing strong parametric assumptions on the CATE structure. Thus, machine learning techniques -- including black-boxes -- can facilitate inference on nuanced causal effects.

Concurrently, the nature of the data being collected in many scientific fields has evolved in complexity, presenting compelling new frontiers for machine intelligence. In particular, advances in technology have led to an explosion of \textit{functional data}, where each observation is not a single number or a vector, but an entire function recorded over a continuum like time or space \citep{kokoszka2017introduction, ramsay2005}. Examples include longitudinal studies where medical or wearable devices are used to record patients' biometrics over time \citep{boschi2024new, jeong2024identifying}, epidemiological studies where infectious diseases are tracked across spatiotemporal domains \citep{boschi2021functional, boschi2023contrasting}, neuroscience studies where activity is monitored over time and across the volume of the brain \citep{qi2018function, boschi2024fasten}, and economic and financial studies where metrics of interest are measured over time \citep{esposito2022venture, caldeira2017forecasting, cremona2023functional}. This type of data poses unique challenges and offers unique opportunities for robust and adaptive machine learning approaches. \textit{Functional Data Analysis} (FDA) provides a set of tools to analyze such data, properly accounting for the inherent smoothness and structure of curves and surfaces, and avoiding the pitfalls of treating them as simple high-dimensional vectors -- a crucial step for developing intelligent systems that can truly understand and reason about complex, continuous processes.

Our work sits at the intersection of causal inference and FDA; while both these areas of statistics are quite advanced and rapidly evolving, machine learning methods to address 
causal questions within the context of functional data remain underdeveloped. A significant gap exists in the literature for methods to estimate heterogeneous treatment effects with functional outcomes, and this gap hinders the development of truly personalized and adaptive AI solutions. Traditional meta-learner methods for estimating heterogeneous treatment effects cannot be applied directly, and current functional data techniques are 
either designed for settings without access to additional covariates \citep{cremona2018iwtomics}, or model outcomes with non-robust tools such as function-on-scalar regression \citep{ecker2024causal, ieva2025enhancing}
which may perform poorly if the model is misspecified, limiting their utility in real-world applications. Relevant exceptions, limited to the estimation of average treatment effects, include \citet{liu2024double, raykov2025kernel, testa2025doubly}, who provide estimators that are robust to some forms of model misspecification. This work aims to bridge the gap by introducing a novel meta-learner that leverages advanced machine learning techniques for F-CATE estimation; that is, estimation of the CATE in settings with functional outcome and scalar predictors.

Our meta-learner, which we call \FOCAL{} (Functional Outcome Causal Learning), enjoys double robustness and significantly advances the capabilities of machine intelligence in these complex settings. We provide a careful derivation of \FOCAL{}, describe its statistical properties, and demonstrate its practical advantages through a comprehensive simulation study. We also showcase its practical utility through the analysis of two datasets: the SHARE dataset \citep{mannheim2005survey}, where we study the heterogeneous effects of chronic conditions on the time progression of quality of life indicators, and a dataset on the COVID-19 epidemic in Italy \citep{boschi2023contrasting}, where we investigate the causal implications of distributed primary health care on COVID-19 mortality patterns.

This work provides a rigorous foundation for applying modern causal learning techniques to the rich data structures increasingly encountered in scientific research, paving the way for more sophisticated and trustworthy AI systems capable of deep causal understanding and precision decision-making.

\section{Results}

\subsection{Overview of \FOCAL{}}
Building upon the principles of doubly robust meta-learning \citep{kennedy2023towards}, \FOCAL{} estimates (and quantifies the uncertainty behind) the causal effect of a binary treatment on a functional outcome, controlling for (i.e.,~conditioning on) measured categorical and scalar confounders, and thus capturing heterogeneity in causal effects. \FOCAL{} requires some standard causal inference assumptions, including conditional ignorability and positivity \citep{tsiatis2006semiparametric}, 
and operates on a collection of $n$ i.i.d.~samples $\data_i=(X_i,A_i,\mathcal{Y}_i)$, $i=1,\dots,n$, where $X_i$ is a set of pretreatment scalar and/or categorical covariates, $A_i$ is a binary variable indicating whether observation $i$ has been exposed to a treatment ($A_i=1$) or not ($A_i=0$), and $\mathcal{Y}_i$ is a functional outcome measured over a continuous domain. Our approach can be applied regardless of the dimension of this domain, but for simplicity we will consider a one-dimensional domain such as time -- so we can think of $\mathcal{Y}_i$ as a smooth curve satisfying standard FDA assumptions \citep{kokoszka2017introduction}.
Based on the available samples, \FOCAL{} returns an estimate of $\thetatarget(x)$, the functional conditional average treatment effect (F-CATE). For each specification $x$ of the covariates, the F-CATE is itself a function unfolding on the same domain as the outcomes $\mathcal{Y}_i$'s;
its dependence on $x$ captures heterogeneity in the response to treatment between individuals with different characteristics (formal definitions, along with more detail on assumptions and theoretical derivations, are provided in the Methods).

\begin{SCfigure}[]
    \centering
    \includegraphics[width=0.67\linewidth]{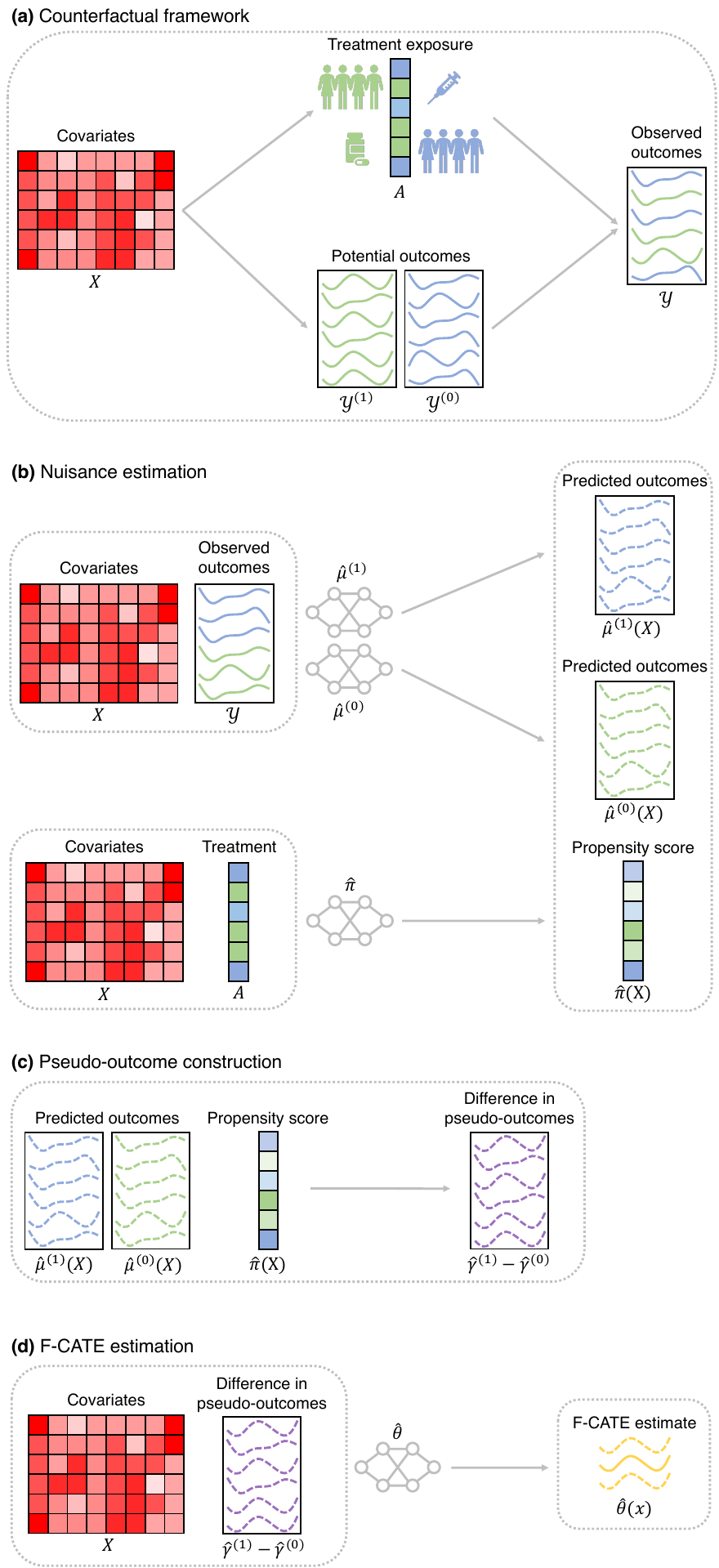}
    \caption{An overview of \FOCAL{}.
    \textbf{(a)} The counterfactual framework. For every subject defined by the covariates in $X$, two potential functional outcomes exist: $\mathcal{Y}^{(1)}$ (outcome under treatment) and $\mathcal{Y}^{(0)}$ (outcome under control). The treatment assignment $A$ acts as a selector, determining which of these potential trajectories is realized as the observed outcome $\mathcal{Y}$, leaving the other unobserved (counterfactual). \textbf{(b)} Nuisance function estimation (\FOCAL{} stage 1). Flexible machine learning models are employed to estimate the conditional outcome regressions $\hat{\mu}^{(a)}(X)$ for 
    the two treatment arms $a = 0,1$, and the propensity score $\hat{\pi}(X)$. \textbf{(c)} Pseudo-outcome construction (\FOCAL{} stage 2). The nuisance estimates are combined to reconstruct functional pseudo-outcomes using a doubly robust formulation, isolating the difference $\hat{\gamma}^{(1)} - \hat{\gamma}^{(0)}$ adjusted for confounding. \textbf{(d)} F-CATE estimation and inference (\FOCAL{} stage 3). The difference in pseudo-outcomes is regressed on the covariates to yield the final estimator $\hat{\theta}(x)$, which quantifies the heterogeneous functional treatment effect across the covariate space. Dashed lines represent simultaneous confidence bands computed around the estimated F-CATE
    to capture uncertainty.}
    \label{fig:overview}
\end{SCfigure}

\FOCAL{}'s core pipeline consists of three interconnected stages (see Figure~\ref{fig:overview}). In the first, \FOCAL{} leverages flexible machine learning algorithms to estimate from the data some crucial, so-called \textit{nuisance} functions \citep{tsiatis2006semiparametric}: the \textit{regression functions} ${\mutarget}^{(a)}(x)$, $a\in\{0,1\}$, which represent the expected functional outcome given the covariates and the treatment assignment, and the \textit{propensity score} $\pitarget(x)$, which represents the probability of receiving the treatment given the covariates. The estimates $\hat\mu^{(a)}(x)$, $a\in\{0,1\}$ are obtained with advanced functional regression techniques \citep{boschi2024fasten, boschi2024fungcn, yao2021deep}, where the covariates map into an entire functional response over the continuous domain under consideration, allowing \FOCAL{} to learn the expected shape and trajectory of the outcome. The estimate $\hat\pi(x)$ is typically handled as a standard binary classification problem; a number of powerful classifiers can be employed to yield an estimate of $\PP{A=1\mid X=x}$. The use of flexible, data-driven machine learning models in these initial estimations is fundamental, as it allows \FOCAL{} to accurately capture complex, non-linear relationships without imposing rigid parametric assumptions, and integrating information coming from different data modalities (scalar and categorical covariates). 

In the second stage, \FOCAL{} reconstructs two functional pseudo-outcomes, ${\gammatarget}^{(a)}(\data_i)$, $a\in\{0,1\}$, for each individual $i=1,\ldots,n$. These pseudo-outcomes are specifically defined to isolate the causal effect of the treatment on the functional outcome, adjusting for potential confounding. They are estimated as:
\begin{equation}
    \hat\gamma^{(1)}(\data_i) = \hat\mu^{(1)}(X_i) + \frac{A_i}{\hat\pi(X_i)}\left(\mathcal{Y}_i - \hat\mu^{(1)}(X_i) \right)\,,\quad \hat\gamma^{(0)}(\data_i) = \hat\mu^{(0)}(X_i) + \frac{1-A_i}{1-\hat\pi(X_i)}\left(\mathcal{Y}_i - \hat\mu^{(0)}(X_i) \right)\,.
\end{equation}
This formulation endows \FOCAL{} with its crucial \textit{doubly robust} property: the estimators for the functional pseudo-outcomes remain asymptotically consistent even if either the functional outcome modeling that produces $\hat\mu^{(a)}(x)$, $a\in\{0,1\}$, or the propensity score modeling that produces $\hat\pi(x)$ are misspecified, provided one of the two is correctly specified. This inherent robustness is a significant advantage for machine intelligence systems deployed in real-world scenarios, where achieving perfect model specification is often impractical. In particular, in the context of our proposal, double robustness enhances the reliability of
causal inferences drawn from functional data.

In the third stage, \FOCAL{} estimates the F-CATE $\thetatarget(x)$. This is accomplished by performing a final regression of the difference in the computed functional pseudo-outcomes $\hat\gamma^{(1)}(\data) - \hat\gamma^{(0)}(\data)$ on the covariates. This step utilizes again advanced functional regression techniques to model how the entire functional treatment effect changes across the covariate space. The output is a function $\hat\theta(x)$ that, for any given covariate profile $x$, describes the expected difference in the functional outcome due to treatment and quantifies its uncertainty by providing a \textit{simultaneous confidence band} which contains the true $\thetatarget(x)$ with at least a user-specified probability. Through this process, \FOCAL{} moves beyond scalar averages, providing granular, reliable, and interpretable causal insights that pinpoint not only who benefits from an intervention, but precisely how their functional response (e.g., the evolution of a quality of life indicator, the progression of mortality during an epidemic
) is expected to change. \FOCAL{} thus empowers machine intelligence to reason effectively about interventions in complex, dynamic systems.

\subsection{Simulation results provide strong evidence of \FOCAL{}'s effectiveness}

\begin{figure}[ht!]
    \centering
    \includegraphics[width=\linewidth]{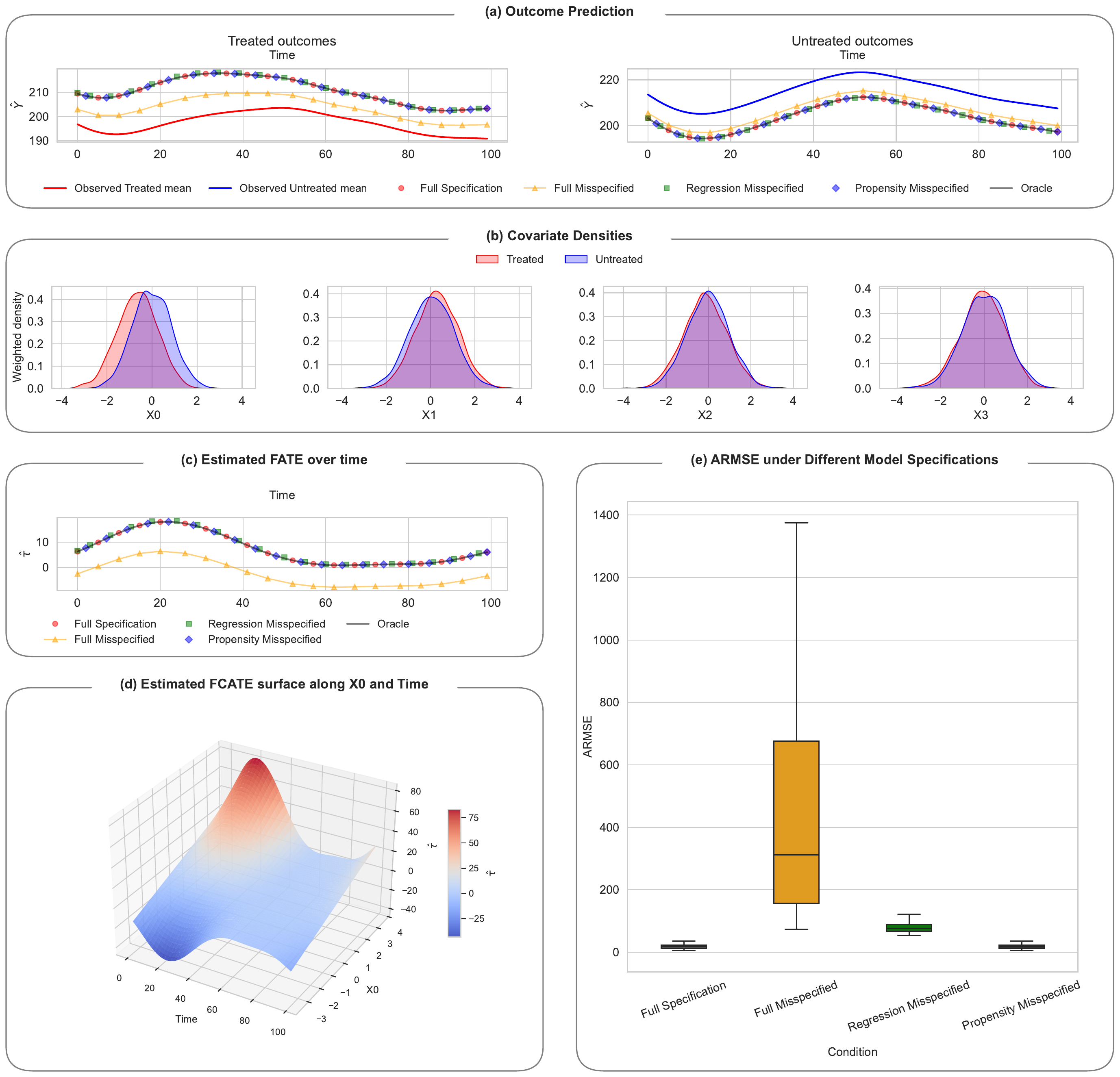}
    \caption{Performance and robustness of \FOCAL{} in simulated settings.
    \textbf{(a)} Comparison of functional outcome trajectories for treated (left) and untreated (right) populations. Solid lines represent the naive observed means, which exhibit significant bias due to confounding (departures from the Oracle, which is shown in solid gray).
    Lines with symbols (circle, triangle, squares, diamonds) represent the model-recovered trajectories under different specifications; note that \FOCAL{} accurately reconstructs the 
    true curves even when one nuisance component is misspecified, correcting the bias inherent in the raw data.
    \textbf{(b)} Density plots for original covariates ($X_j$, $j=0,1,2,3$) illustrating the distributional imbalance between treated (red) and control (blue) groups,
    which motivates the adjustment (the distributions differ especially for $X_0$).
    \textbf{(c)} Evaluation of FATE estimation accuracy. The line plot displays the estimated Functional Average Treatment Effect (FATE) over time; estimates remain stable and close to the truth (Oracle, in solid grey) 
    unless both models are misspecified.
    \textbf{(d)} Example of F-CATE target. The surface plot visualizes the estimated Functional Conditional Average Treatment Effect (F-CATE) as a function of time and covariate
    $X_0$, demonstrating \FOCAL{}’s ability to capture complex, heterogeneous effect surfaces
    (the true surface is shown for reference in Supplementary Figure~\ref{fig:true_surface}).
    \textbf{(e)} Evaluation of F-CATE estimation accuracy. The boxplots show 
    Aggregated Root Mean Squared Error (ARMSE) values across 1000 Monte Carlo replications, confirming that 
    sizeable error inflation occurs only under
    double misspecification.
    }
    \label{fig:sim}
\end{figure}

To rigorously evaluate \FOCAL{}’s ability to recover
heterogeneous functional treatment effects, we conduct a comprehensive simulation study designed to mimic challenging real-world conditions. We generate datasets of $n=5000$ observations with time-varying functional outcomes defined as linear functions of coefficients with a Matérn covariance structure that induces realistic smoothness and correlation. Crucially, the data generating process features non-linear confounding between covariates and treatment assignment, creating a 
setting where simple linear adjustments would fail. We test the estimator across four distinct scenarios -- ranging from 
fully accurate model specification to simultaneous misspecification of both the propensity score and outcome regression components. 
These scenarios are created 
introducing non-linear transformations of the covariates into the nuisance models.

Our analysis first highlights the significant risks of relying on standard observational metrics in such complex settings. As visualized in Figure~\ref{fig:sim}\textbf{a}, naive comparisons of the observed mean trajectories between treated and untreated populations exhibit substantial bias, failing to account for the underlying confounding structure (due to covariate imbalance, see Figure~\ref{fig:sim}\textbf{b}). In contrast, \FOCAL{} effectively corrects for these biases. Even when nuisance models are imperfect, \FOCAL{} reconstructs the counterfactual trajectories with high fidelity, realigning the estimated effects, obtained by averaging the difference in pseudo-outcomes computed through \FOCAL{}'s second stage, with the ground truth and demonstrating the necessity of robust causal adjustments over simple associative comparisons (Figure~\ref{fig:sim}\textbf{c}).

The primary strength of \FOCAL{} -- its double robustness -- is confirmed by the stability of its estimation accuracy under stress. We quantify performance using the Aggregated Root Mean Squared Error (ARMSE, see Methods for a definition) between the estimated and true Functional Conditional Average Treatment Effects (F-CATE). \FOCAL{} exhibits negligible performance deterioration when either the propensity score or the outcome regression is misspecified, effectively matching the accuracy achieved under full model specification. As shown in the ARMSE boxplots (Figure~\ref{fig:sim}\textbf{e}), significant error inflation and variance degradation emerge only in the ``worst-case'' scenario where both nuisance models are simultaneously misspecified.

Furthermore, \FOCAL{} successfully recovers the granular structure of treatment effect heterogeneity. The estimated F-CATE surfaces (Figure~\ref{fig:sim}\textbf{d}) accurately map how the treatment effect varies continuously across both the functional domain and the covariate space, capturing the non-linear interactions driven by the data generating process. These findings empirically corroborate the theoretical guarantees of our framework, suggesting that \FOCAL{} can provide reliable, nuanced causal insights in scientific applications where the true data-generating mechanisms are often complex and rarely known with certainty.

\subsection{\FOCAL{} reveals the heterogeneous effects of chronic conditions on quality of life}

\begin{figure}[ht!]
    \centering
    \includegraphics[width=\linewidth]{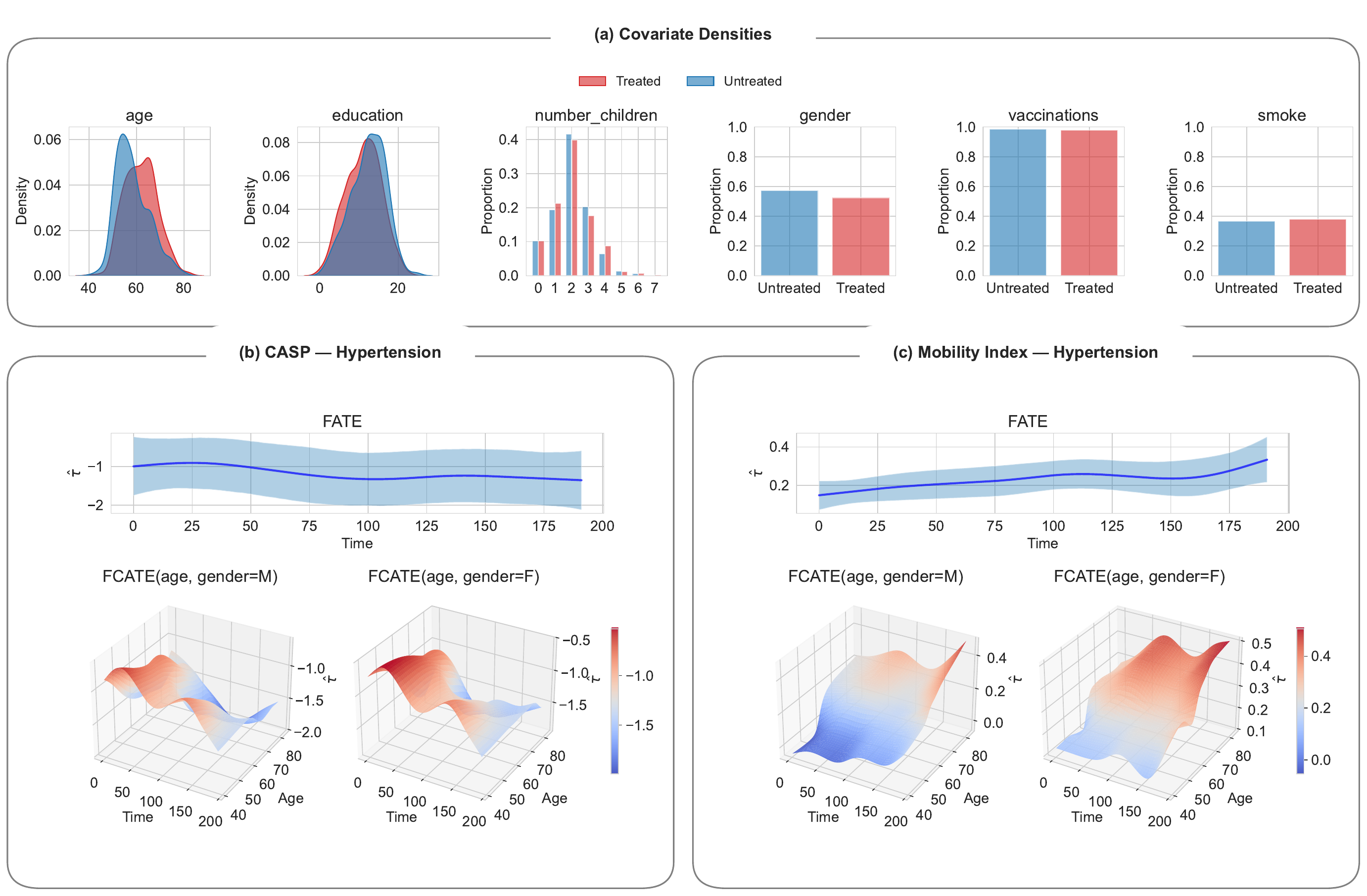}
    \caption{Heterogeneous impact of hypertension on quality of life trajectories (SHARE study). \textbf{(a)} Distributions of key baseline covariates for treated (hypertension) and control populations. While most socio-demographic factors show comparable spread, age exhibits a 
    sizeable imbalance between groups;
    this motivates the propensity adjustment performed by \FOCAL{}. \textbf{(b)} Results for CASP. The estimated Functional Average Treatment Effect (FATE) and the associated simultaneous 95\% confidence band over the 192-month observation window indicate a statistically significant decline in well-being: hypertension exerts a
    negative and progressively larger effect on the CASP quality of life score.
    The estimated F-CATE surfaces characterizing the heterogeneity of the causal effect conditioned on age and gender (all other binary and scalar covariates are set to their modal and mean values, respectively) show that the detrimental effect of hypertension intensifies with age 
    across the entire time domain. Gender plays an intercept-like role, amplifying the absolute magnitude of the functional effect for males without 
    markedly altering the shape of the 
    surface. \textbf{(c)} Results for Mobility Index. The estimated Functional Average Treatment Effect (FATE) and the associated simultenous 95\% confidence band show a positive, increasing, and statistically significant effect on the Mobility Index (higher values indicate greater impairment). 
    The estimated F-CATE surfaces reveal a distinct temporal profile for older individuals -- a moderate mid-period exacerbation
    followed by a sharp, late-stage increase during the final stages of the time domain. This profile is more pronounced and occurs earlier for women than for men.
    Note: 
    the plots for CASP and the Mobility Index
    have response-specific vertical scales
    (the two outcomes are measured on different scales; see Supplementary Figure \ref{fig:outhyp}).
    }
    \label{fig:share}
\end{figure}

To demonstrate the use of \FOCAL{}{}, we apply it to data from SHARE ({\em Survey of Health, Aging and Retirement in Europe}), a large longitudinal study comprising thousands of subjects followed through eight data collection surveys (called ``waves'') conducted between 2004 and 2020 \citep{mannheim2005survey, borsch2013data, bergmann2017survey, borsch2020survey}.

Specifically, we employ \FOCAL{} to investigate the heterogeneous causal effects of chronic conditions on longitudinal indicators of quality of life in different subpopulations. To ensure reliable curve estimation, we focus on $1518$ subjects who participated in at least seven (out of eight) waves (see Methods for full detail on data preprocessing). For each subject, we consider two functional indicators of quality of life -- a mobility index and the Quality of Life Scale (CASP) \citep{hyde2003measure} measured over 192 months; binary status relative to the chronic disease hypertension; and a variety of socio-demographic and healthcare related covariates (see Table~\ref{tab:covariates} in the Supplementary Information).

Following the logic of the counterfactual model, we define the causal effect of a chronic disease on a quality of life indicator as the expected difference between the functional outcome if a subject had, or did not have, the chronic disease. Therefore, we form treatment groups identifying subjects who present chronic conditions at the beginning of the study, and control groups identifying subjects who never develop chronic conditions throughout the study period. This results in 419 treated subjects and 577 control subjects for hypertension.

For each functional outcome, we then fit \FOCAL{}{} using a function-on-scalar neural network specification for the regression function $\hat{\mu}^{(a)}$, and logistic regression for the propensity score $\hat{\pi}^{(a)}$. No time-varying or post-baseline variables beyond the cohort definition are used in the propensity score or outcome models. Results are shown in Figure~\ref{fig:share}. 
In the aggregate, the estimated Functional Average Treatment Effects (FATE), obtained by averaging the difference in pseudo-outcomes computed through \FOCAL{}'s second stage, 
reveal a coherent and clinically consistent narrative regarding the impact of hypertension on quality of life. We observe a progressive deterioration in well-being for treated subjects across both functional outcomes. Specifically, hypertension causes a deepening negative effect on the CASP score over time. Simultaneously, it leads to an increasing positive effect on the mobility index; since higher mobility scores indicate greater physical impairment, this confirms that the condition exerts a detrimental force on both psychological and physical dimensions of health. 

Next, to move beyond aggregate summaries and uncover the drivers of this variation, we turn to a more refined analysis using F-CATE estimates. Specifically, we examine the estimated F-CATE surfaces conditioned on age and gender, as these variables represent critical axes of interest from both scientific and policy perspectives. Moreover, as shown by the distributions in Figure~\ref{fig:share}\textbf{a}, age exhibits the strongest imbalance between treated and control groups, underscoring the importance of the proper correction performed by \FOCAL{}.
The detrimental impact of age on CASP can be observed across the entire temporal domain and intensifies markedly with age, as reported by \citet{anderson1999effect, lloyd2005hypertension}. 
Specifically, the negative effect 
is relatively stable 
across the 192-month observation window for younger individuals, and sharpens over time for older individuals, suggesting a decreased resilience for the latter.
Furthermore, gender appears to play an ``intercept'' role -- amplifying the absolute magnitude of the 
effect of hypertension for male subjects, a result supported by previous clinical evidence \citep{cutler2008trends, hayes1998hypertension, sandberg2012sex, vitale2010gender, everett2015gender}. 
Concerning the mobility index, the estimated F-CATE surface reveals a distinct temporal profile for older individuals, characterized by a moderate mid-period exacerbation followed by a sharp, late-stage increase (a ``bump'' in the estimated surface) during the final stages of the analysis. These
observations align with previous literature 
\citep{burt1995prevalence, ostchega2007trends}. Notably, the role of gender
in this context is to anticipate and exacerbate the profile for women;
older women appear to
experience the adverse consequences of hypertesion on moblity
earlier and more severely than older men \citep{alqahtani2025stepping}.

\subsection{\FOCAL{} sheds light on the heterogeneous role of distributed primary health care in shaping COVID-19 mortality patterns}

To further demonstrate the practical 
value of \FOCAL{}, we use it to investigate the heterogeneous causal effects of distributed primary health care on COVID-19 mortality patterns during the first two pre-vaccine epidemic waves in Italy \citep{boschi2021functional, boschi2023contrasting}. This application showcases \FOCAL{}'s capability to handle complex spatiotemporal domains where mortality patterns are expressed as functional outcomes.

The 
data encompasses all 107 Italian provinces throughout the two
major pre-vaccine waves of the pandemic. We use as observation windows the 150 days from February 25, 2020 to July 23, 2020 for the first wave, and the 150 days from October 1, 2020 to February 27, 2021 for the second. The first wave 
was characterized by dramatic, spatially concentrated mortality peaks, particularly in northern regions like Lombardia. In contrast, the second wave
was less dramatic, more widespread and asynchronous across the country. The
main goal of this analysis is to evaluate the heterogeneous causal effects of distributed primary health care on province-level COVID-19 mortality rate trajectories. 
We define the treatment using records of the number of adults per family doctor (see Methods for details); ``treated'' provinces are those with higher ratios, i.e. fewer family doctors, and thus poorer distributed primary health care.

The analysis incorporates several critical socio-demographic and environmental covariates that may confound the relationship between mortality and 
the treatment, including the percentage of the population over 65 years of age; average beds per hospital, 
students per classroom and employee per firm (proxies for the ability of hospitals, schools and workplaces to act as contagion hubs), and PM10 pollution levels. 
In addition, following the logic of \citet{boschi2023contrasting},
we consider the covariate ``area before'' -- a province-specific standardized metric evaluating the area under the mortality curve up to the point when 
restriction measures were imposed nationally or locally. This standardized cumulative mortality captures the degree to which the epidemic was able to ``build up'' within a province prior to the implementation of lockdowns and social distancing -- avoiding the use of case counts, which were highly unreliable. As shown in 
Figure~\ref{fig:covid}\textbf{a}, several of 
the covariates under consideration exhibit 
marked imbalances between ``treated'' and ``untreated'' provinces, thus necessitating the 
adjustment provided by \FOCAL{}.

\begin{figure}[t!]
    \centering
    \includegraphics[width=\linewidth]{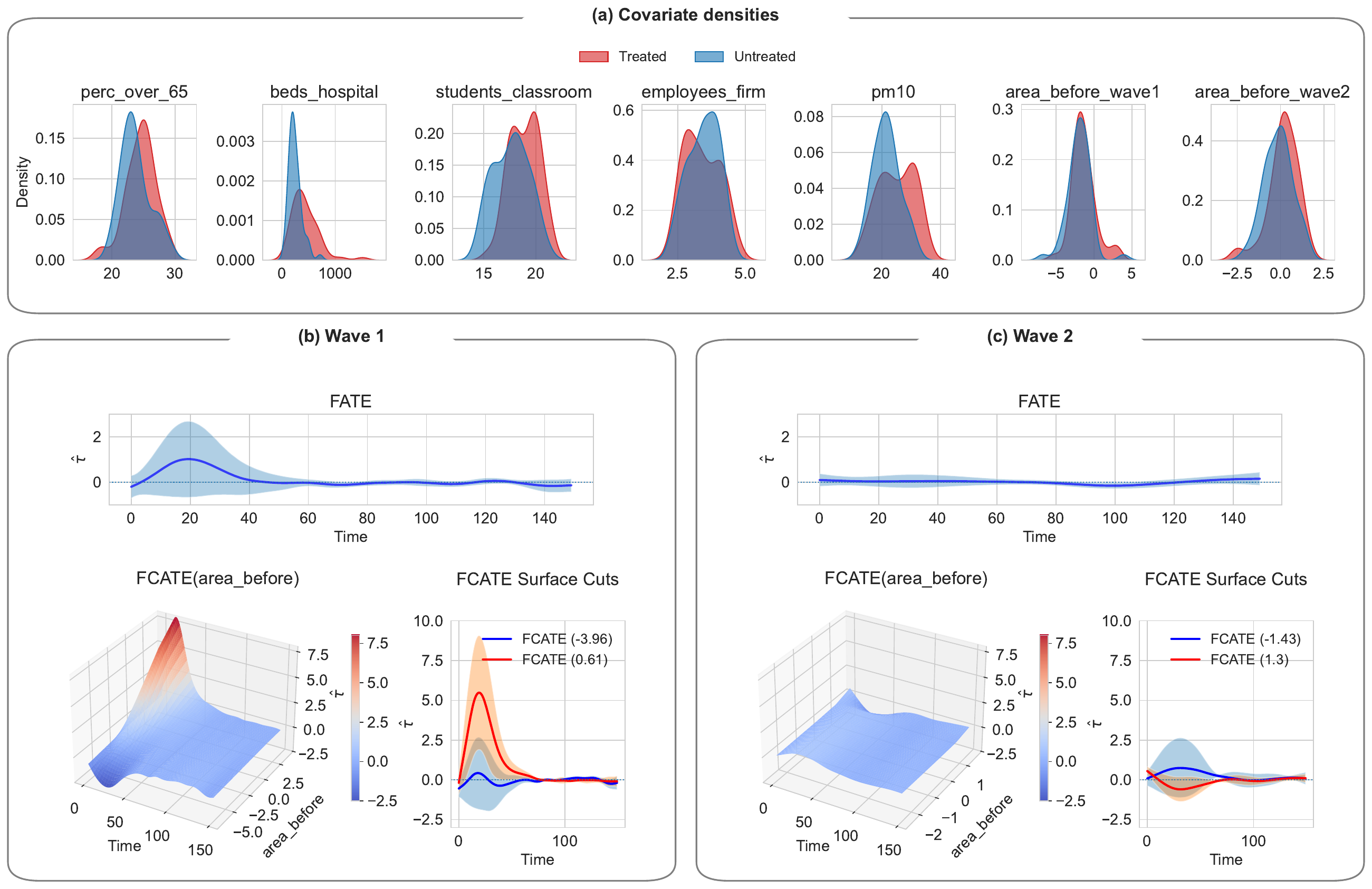}
    \caption{Heterogeneous impact of 
    availability of primary health care on COVID-19 mortality patterns (COVID-19 study). \textbf{(a)} 
    Distribution of key socio-demographic and environmental baseline covariates for provinces with high (``Treated'') versus low (``Untreated'') 
    number of adults per primary care physician (treated provinces are those with poorer distributed primary health care).
    Sizeable imbalances between groups
    are observed in most
    of the covariates,
    motivating the propensity adjustment performed 
    by \FOCAL{}.
    \textbf{(b)} Wave 1 results. Top: The estimated Functional Average Treatment Effect (FATE) for the first wave (February-July 2020)
    suggests 
    that poor distributed primary health care exacerbated mortality 
    in the initial stages,
    but this is non-significant based on the 95\% confidence band. 
    Bottom left: The estimated F-CATE surface conditioned on the ``area before'' standardized metric disaggregates this effect, revealing a dramatic
    escalation in provinces with the highest early outbreak intensity, and thus pinpointing how poor distributed primary health care was most consequential where the epidemic
    had the strongest early momentum.
    Bottom right: Cuts of the 
    surface plot at 
    the 10\textsuperscript{th} (blue) and 90\textsuperscript{th} (red) percentiles of
    the marginal distribution of ``area before'', 
    with 90\% confidence bands supporting strong significance at the latter. \textbf{(c)} Wave 2 results. Top: The estimated FATE for the second wave (October 2020–February 2021)
    appears rather flat with a tight 95\% confidence band around $0$, 
    suggesting a negligible impact of distributed primary health care during the more widespread and asynchronous epidemic unfolding that characterized
    this wave compared to the first.
    Bottom left: The estimated F-CATE surface confirms a negligible impact also when conditioning on ``area before''. 
    Bottom right: Cuts of the
    surface plot at 
    the 10\textsuperscript{th} (blue) and 90\textsuperscript{th} (red) percentiles of the marginal distribution of ``area before'',
    again with 90\% bands. 
    }
    \label{fig:covid}
\end{figure}

The estimated Functional Average Treatment Effects (FATE), obtained by averaging the difference in pseudo-outcomes computed through \FOCAL{}'s second stage, fail to provide a 
meaningful 
picture of the
impacts of distributed primary health care on mortality. In fact, in both the first and second 
wave (Figure \ref{fig:covid}\textbf{b} and \textbf{c}, top), the estimated FATE trajectories are not statistically significant.
Thus, marginal averages 
lead to the conclusion that the availability of family doctors
did not have a role in curbing mortality. 

However, this lack of statistical significance 
masks heterogeneous causal impacts. To
disaggregate such impacts, we turn to the F-CATE surfaces
conditioning on
``area before'' -- a standardized metric of early epidemic momentum. By focusing specifically on the interaction between 
``area before'' and the temporal progression of mortality, \FOCAL{} reveals a striking and localized story. During the first wave, the estimated F-CATE surface (Figure \ref{fig:covid}\textbf{b}, bottom)
shows a dramatic and significant peak in the treatment effect for provinces with high early epidemic momentum.
This 
pinpoints poor distributed primary health care 
as a primary driver of mortality exactly in those provinces where COVID-19 had established a strong foothold before the lockdowns. In these provinces, family doctors made a decisive difference as the critical first line of defense, particularly at a stage when standardized clinical protocols and dedicated assistance infrastructures were not yet in place. In contrast, 
the estimated F-CATE surface for the second wave (Figure \ref{fig:covid}\textbf{c}, bottom), in agreement with the aggregate effect, 
reflects a
weakening and flattening of these dynamics. 
The relative importance of primary care in curbing mortality diminished as the epidemic became more geographically widespread and asynchronous, and as the broader healthcare system integrated more mature and diverse response mechanisms. These findings illustrate 
\FOCAL{}'s ability to provide
granular, spatiotemporal insights -- 
unveiling how local conditions and the timing of the outbreak modulated the role of distributed primary health care.

\section{Discussion}
In this paper, we introduce \FOCAL{} (Functional Outcome Causal Learning), a novel doubly robust meta-learner designed to estimate heterogeneous, covariate-dependent treatment effects when outcomes are expressed as functions over a continuous domain. Our proposal addresses a critical gap in the causal inference literature, where existing meta-learning approaches are largely confined to scalar outcomes, thereby failing to fully leverage the rich, continuous information inherent in functional data. By integrating advanced machine learning and functional data analysis tools, \FOCAL{} 
advances the capabilities of machine intelligence to provide nuanced and individualized causal insights.

A key strength of \FOCAL{} lies in its \textit{doubly robust} design, a property inherited and extended from the well-established DR-Learner framework \citep{kennedy2023towards}. This ensures that \FOCAL{} estimates of the F-CATE remain asymptotically consistent even if one of the models employed for the nuisance functions (functional outcome regressions and propensity score) is misspecified, provided the other is correctly specified. This robustness is paramount for real-world machine intelligence applications, where achieving perfect model specification is often challenging due to lack of prior knowledge, data complexity or inherent biases. Furthermore, \FOCAL{}'s flexibility to incorporate arbitrary, state-of-the-art machine learning algorithms for modeling the nuisance functions and the final F-CATE regression allows it to capture complex, non-linear relationships without imposing rigid parametric assumptions. This is crucial for capturing how the entire functional response varies across different subpopulations, and thus provide granular and interpretable causal insights that go beyond simple scalar averages. 

We confirmed \FOCAL{}'s superior performance and robustness against existing non-robust functional methods through a comprehensive simulation study, and we demonstrated its practical scope through applications to
real-world functional datasets
from the SHARE study \citep{mannheim2005survey}
and from the COVID-19 epidemic in Italy \citep{boschi2023contrasting}. 
Also importantly, \FOCAL{} possesses theoretical guarantees, including an oracle property and valid simultaneous confidence bands (see Methods and Supplementary Information for details). 
Such guarantees further solidify \FOCAL{}'s statistical rigor and enable reliable uncertainty quantification for the estimated F-CATE.

Despite its strengths, \FOCAL{} operates under a number of assumptions that are typical in the causal inference and functional data analysis literature.
We note that, albeit standard, these assumptions are crucial; their violation could impact the  effectiveness of \FOCAL{}, as well as the validity of causal claims. Moreover, while \FOCAL{}'s use of highly flexible machine learning algorithms is a strength, it can also lead to increased computational
burden. 
For the simulations and applications presented in this paper, computational costs were negligible (in the order of minutes on standard laptops). However, they may escalate in applications with much larger sample sizes or much more densely sampled functional data.
Future work could explore more computationally efficient functional regression techniques or distributed computing strategies to enhance scalability.

Looking ahead, our proposal opens several exciting avenues for future research. Extensions could include adapting \FOCAL{} to handle multimodal covariates, which are increasingly common in biomedical and social science applications \citep{boschi2024functional, ektefaie2023multimodal, ngiam2011multimodal}. Foundation models may be employed to learn latent embeddings, simultaneously leveraging data with different modalities (curves, images, scalars, labels, etc.) -- and thus providing further insights on heterogeneous effects \citep{rajendran2024learning, bengio2013representation}. Methods for estimating the effects of time-varying treatments and dynamic treatment regimes on functional outcomes would be another important research direction \citep{tan2025causal}. Furthermore, exploring the integration of cutting-edge deep learning architectures specifically designed for functional data within \FOCAL{}'s nuisance function and F-CATE estimation stages could unlock even greater flexibility and predictive power \citep{rao2023nonlinear, rao2023modern}. 

Ultimately, \FOCAL{} provides a rigorous foundation for applying modern causal learning techniques to the rich data structures increasingly encountered in scientific research, paving the way for more sophisticated and trustworthy AI systems capable of nuanced causal understanding and precision decision-making in personalized health, adaptive policy design, and beyond.

\section{Methods}

\subsection{Notation and target identification}
Let $\mathcal T = [t_1,t_2]$ be a closed bounded interval. Without loss of generality, we shall consider $\mathcal T = [0,1]$. Let $A\in\{0,1\}$ be a binary variable that indicates whether a subject belongs to the treated group ($A=1$) or to the control group ($A=0$); $X\in\RR^p$ be a $p$-dimensional vector of pretreatment covariates; and $\mathcal{Y}^{(0)}, \mathcal{Y}^{(1)}\in \Ltwo (\mathcal{T})$ be square integrable functions defined over $\mathcal{T}$, representing the potential outcomes under control and treatment, respectively. A prototypical \textit{full-data} sample is $\data^F = \left(X, A, \mathcal{Y}^{(0)}, \mathcal{Y}^{(1)}\right)$. Of course, in practice, we never have access to $\data^F$, as we can only observe one of the two potential outcomes. More formally, we have access to a collection of $n$ independent and identically distributed \textit{observed-data} samples, $\left\{\data_i = (A_i, X_i, \mathcal{Y}_i)\right\}_{i=1}^n$, where $\mathcal{Y}_i = A_i\mathcal{Y}^{(1)}_i + (1-A_i)\mathcal{Y}^{(0)}_i$ represents the observed outcome. We let $\data = (A,X,\mathcal{Y})$ denote an independent copy of $\mathcal{D}_i = (A_i,X_i,\mathcal{Y}_i)$. Finally, we assume that $\Ltwo(\mathcal{T})$, the space of square integrable functions over $\mathcal{T}$, is equipped with the $\Ltwo$-norm defined by $\norm{f}^2=\int_{\mathcal{T}}(f(t))^2\,dt<\infty$, for $f\in \Ltwo(\mathcal{T})$, thus making $(\Ltwo(\mathcal{T}), \norm{\cdot})$ a Hilbert space. In principle, $\mathcal T$ could be multidimensional, but for simplicity we consider it to be a proper subset of $\RR$. 

Our target is the \textit{functional conditional average treatment effect} (F-CATE), defined as
\begin{equation}
\label{eq:FCATE}
    \thetatarget(x) = \EE{\mathcal{Y}^{(1)}- \mathcal{Y}^{(0)}\mid X=x}\,,
\end{equation}
where the expectation is taken over the data generating process. Recall that the potential outcomes $\mathcal{Y}^{(0)}, \mathcal{Y}^{(1)}$ are functions over $\mathcal{T}$; so is $\thetatarget(x)$ for each $x\in\RR^p$ (we omit arguments over $\mathcal{T}$ for notational simplicity). To provide some intuition, if $p=1$ and thus $x\in\RR$, $\thetatarget$ is a surface defined over the domain $\mathcal{T} \times \RR$, and each evaluation of $\thetatarget$ at $x$, i.e.~$\thetatarget(x)$, is a function defined over $\mathcal{T}$.

The F-CATE target in Equation~\ref{eq:FCATE} is a function of the full data, and as such it is not directly learnable from the observed data. With the following minimal set of assumptions, which are standard in the causal inference literature \citep{kennedy2023towards, wager2024causal}, we can identify the target relying exclusively on the observed data. 
\begin{assumption}[Identifiability]
\label{ass:identify}
Let the following hold:
    \begin{enumerate}[label=\textbf{\alph*.}]
    \item \textbf{Consistency.} The potential outcome under a specific treatment is the same regardless of the mechanism by which the treatment is administered; that is, $\mathcal{Y} = \mathcal{Y}^{(a)}$ if $A=a$. Equivalently, $\mathcal{Y} = A\mathcal{Y}^{(1)} + (1-A)\mathcal{Y}^{(0)}$.
    \item \textbf{No unmeasured confounding.} 
    $(\mathcal{Y}^{(0)}, \mathcal{Y}^{(1)}) \indep A \,|\, X$.
    \item \textbf{Weak positivity.} $0<\PP{A=1\,|\,X=x}<1$ for every $x\in\RR^p$ almost surely.
\end{enumerate}
\end{assumption}
Define the \textit{regression functions} as
\begin{equation}
   {\mutarget}^{(a)}(x) = \EE{\mathcal{Y}\,|\,X=x,\,A=a}\,,\quad a\in\{0,1\}\,,
\end{equation}
and the \textit{propensity score function} as 
\begin{equation}
   \pitarget(x) = \PP{A = 1\,|\,X=x} = \EE{\one\{A = 1\}\,|\,X = x}\,.
\end{equation}
These are usually referred to as {\em nuisance functions}; note that they depend only on the observed data \citep{kennedy2022semiparametric}. Finally, define the \textit{pseudo-outcomes} as
\begin{equation}
    {\gammatarget}^{(1)}(\data) = {\mutarget}^{(1)}(X) + \frac{A}{\pitarget(X)} \left(\mathcal{Y} - {\mutarget}^{(1)}(X) \right)\,,\quad {\gammatarget}^{(0)}(\data) = {\mutarget}^{(0)}(X) + \frac{1-A}{1-\pitarget(X)} \left(\mathcal{Y} - {\mutarget}^{(0)}(X) \right)\,.
\end{equation}

The following claim makes the identification of F-CATE explicit. 
\begin{lemma}
\label{th:equivalence}
Under Assumption~\ref{ass:identify} (identifiability), the F-CATE target defined in Eq.~\ref{eq:FCATE} can be rewritten as:
\begin{equation}
    \thetatarget(x) = \EE{{\gammatarget}^{(1)}(\data)- {\gammatarget}^{(0)}(\data) \mid X=x}\,.
\end{equation}
\end{lemma}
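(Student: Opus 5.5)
The plan is to compute $\EE{{\gammatarget}^{(1)}(\data)\mid X=x}$ and $\EE{{\gammatarget}^{(0)}(\data)\mid X=x}$ separately, showing they equal $\EE{\mathcal{Y}^{(1)}\mid X=x}$ and $\EE{\mathcal{Y}^{(0)}\mid X=x}$ respectively. Subtracting and using linearity of conditional expectation in $\Ltwo(\mathcal{T})$ then gives the claim via Eq.~\ref{eq:FCATE}. Throughout, conditional expectations of $\Ltwo(\mathcal{T})$-valued random elements are understood as Bochner conditional expectations; equivalently, one can argue pointwise in $t\in\mathcal{T}$ for almost every $t$. The linearity and tower properties hold in either reading.

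For the treated arm, I condition on $X=x$. The term ${\mutarget}^{(1)}(X)$ is $X$-measurable and contributes ${\mutarget}^{(1)}(x)$. Weak positivity (Assumption~\ref{ass:identify}c) guarantees $\pitarget(x)>0$, so the weight is well defined and $1/\pitarget(X)$ can be pulled out. The remaining term is
\begin{equation*}
\frac{1}{\pitarget(x)}\EE{A\left(\mathcal{Y}-{\mutarget}^{(1)}(X)\right)\mid X=x}.
\end{equation*}
To evaluate it, I apply the tower property over $A$, conditioning further on $A$. The $A=0$ part vanishes, which leaves
\begin{equation*}
\PP{A=1\mid X=x}\left(\EE{\mathcal{Y}\mid X=x,A=1}-{\mutarget}^{(1)}(x)\right)=0
\end{equation*}
by the definition of ${\mutarget}^{(1)}$. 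Hence $\EE{{\gammatarget}^{(1)}(\data)\mid X=x}={\mutarget}^{(1)}(x)$. The control arm is symmetric, using $1-\pitarget(x)>0$, and gives $\EE{{\gammatarget}^{(0)}(\data)\mid X=x}={\mutarget}^{(0)}(x)$.

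The identification step links ${\mutarget}^{(a)}(x)$ to $\EE{\mathcal{Y}^{(a)}\mid X=x}$. By consistency (Assumption~\ref{ass:identify}a), $\EE{\mathcal{Y}\mid X=x,A=a}=\EE{\mathcal{Y}^{(a)}\mid X=x,A=a}$. By no unmeasured confounding (Assumption~\ref{ass:identify}b), this equals $\EE{\mathcal{Y}^{(a)}\mid X=x}$. Positivity ensures the conditioning events $\{X=x,A=a\}$ carry positive conditional mass, so both conditional expectations are well defined for almost every $x$.

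I expect the main obstacle to be technical rather than conceptual. One has to justify interchanging conditional expectation with the functional structure and pulling the $X$-measurable scalar weights out of Bochner conditional expectations. This requires integrability, $\EE{\norm{\mathcal{Y}^{(a)}}}<\infty$, so I would assume it explicitly (it is implicit in $\mathcal{Y}^{(a)}\in\Ltwo(\mathcal{T})$ having well-defined means). The cleanest route is to note that for each fixed $t$ the scalar doubly robust identity holds. By Fubini the statements then agree as elements of $\Ltwo(\mathcal{T})$. It is also worth remarking that the statement holds for $P_X$-almost every $x$.
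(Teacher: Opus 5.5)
Your proposal is correct and follows essentially the same route as the paper: split $\EE{{\gammatarget}^{(a)}(\data)\mid X=x}$ by linearity into the regression term and the weighted residual term, show that the residual term has zero conditional mean, and identify ${\mutarget}^{(a)}(x)$ with $\EE{\mathcal{Y}^{(a)}\mid X=x}$ via consistency and no unmeasured confounding. Your version is more complete than the paper's: the paper drops the residual term without the conditioning-on-$A$ computation you give, and it leaves the roles of positivity and integrability implicit.
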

We defer the proof of this and subsequent claims to the Supplementary Information, and are now ready to provide full detail on the proposed \FOCAL{}.

\subsection{\FOCAL{} in detail}
\FOCAL{} is constructed as a doubly robust meta-learner for $\thetatarget(x)$, building upon the principles of the DR-Learner \citep{kennedy2023towards} but explicitly extending them to functional outcomes. \FOCAL{}'s pipeline comprises three stages: (i) estimation of the nuisance functions, (ii) reconstruction of the functional pseudo-outcomes, and (iii) final estimation of the F-CATE.

In the {\em first stage}, \FOCAL{} estimates the following nuisance functions:
\begin{itemize}
    \item \textbf{Functional outcome regressions:} these represent the conditional expectations of the functional outcome given the covariates and the treatment status, denoted as ${\mutarget}^{(a)}(x) = \EE{\mathcal{Y}\mid X=x, A=a}$, $a \in \{0,1\}$. They are estimated fitting functional regression models to the observed data \citep{kokoszka2017introduction}. We note that other machine learning models, such as specialized neural networks for functional data \citep{boschi2024fungcn, yao2021deep}, can be employed as base learners for this task. The estimates are denoted as $\hat\mu^{(a)}(x)$, $a\in\{0,1\}$.
    \item \textbf{Propensity score:} this represents the probability of receiving treatment given the covariates, denoted as $\pitarget(x) = \PP{A=1\mid X=x}$. Its estimation can employ a variety of models and algorithms for binary classification (e.g., logistic regression, random forests, etc.). The estimate is denoted as $\hat\pi(x)$.
\end{itemize}

In the {\em second stage}, \FOCAL{} reconstructs the functional pseudo-outcomes $\gamma^{(a)}(\data_i)$, $a\in\{0,1\}$ for each individual $i=1,\ldots,n$. These capture the individual-level treatment effect on the functional outcome. For an individual with observed outcome $\mathcal Y_i$, covariates $X_i$ and treatment $A_i$ the estimates are: 
\begin{equation}
    \hat\gamma^{(1)}(\data_i) = \hat\mu^{(1)}(X_i) + \frac{A_i}{\hat\pi(X_i)}\left(\mathcal{Y}_i - \hat\mu^{(1)}(X_i) \right)\,,\quad \hat\gamma^{(0)}(\data_i) = \hat\mu^{(0)}(X_i) + \frac{1-A_i}{1-\hat\pi(X_i)}\left(\mathcal{Y}_i - \hat\mu^{(0)}(X_i) \right)\,.
\end{equation}
A fundamental property of this construction, extended to the functional setting, is its double robustness. Double robustness is critical for the reliability of \FOCAL{} estimates, as consistency is maintained even if one of the nuisance models is misspecified, which is a common occurrence in real-world applications. We note that, in applications involving functional outcomes, it would be reasonable to assume that estimation of the regression functions ${\mutarget}^{(a)}(x)$, $a\in\{0,1\}$, is a more complex task than estimation of the propensity score function $\pitarget(x)$. Indeed, ${\mutarget}^{(a)}(x)$ maps a vector of size $p$ into a function in $C(\mathcal{T})$, while $\pitarget(x)$ maps the same vector into a variable in $(0,1)$. Double robustness ensures that, if $\hat{\pi}(x)$ converges to $\pitarget(x)$, $\hat\gamma^{(a)}(\data)$ converges to ${\gammatarget}^{(a)}(\data)$ for \textit{any} choice of $\hat{\mu}^{(a)}(x)$. See \citet{kennedy2022semiparametric, testa2025doubly} for additional information and perspectives on double robustness.

In the {\em third stage}, \FOCAL{} estimates the F-CATE $\thetatarget(x)$ by fitting a final regression model for the difference between the functional pseudo-outcomes $\hat\gamma^{(1)}(\data) - \hat\gamma^{(0)}(\data)$ (a functional response) against the covariates $X$. This involves again utilizing functional regression techniques. The resulting $\hat\theta(x)$ provides a comprehensive and interpretable map of how the causal effect on the functional outcome varies across the entire covariate space.

To mitigate overfitting of the nuisance functions and ensure valid asymptotic inference for the \FOCAL{} estimator, we employ \textit{cross-fitting} \citep{ahrens2025introduction,chernozhukov2018double}. We randomly split the observations $\{\data_1,\dots,\data_{n}\}$ into $J$ disjoint folds (without loss of generality, assume that $n$ is divisible by $J$). For each $j=1,\ldots, J$ we form $\hat{\mathbbmss{P}}^{[-j]}$ 
with all but the $j$-th fold, and $\mathbbmss{P}_{n}^{[j]}$ with the $j$-th fold. Then, we learn $\hat{\mu}^{(a)[-j]}(x)$, $a \in \{0,1\}$ and
$\hat{\pi}^{[-j]}(x)$ using data in $\hat{\mathbbmss{P}}^{[-j]}$, and compute $\hat{\gamma}_{\hat\mu^{(a)[-j]};\hat\pi^{[-j]}}^{(a)}(\data)$, $a \in \{0,1\}$ and regress their difference on $X$ using data in $\mathbbmss{P}_n^{[j]}$. This produces a fold-specific estimate $\hat\theta^{[j]}(x)$ of the F-CATE. Finally, we average across folds to obtain our cross-fitted \FOCAL{} estimator as
\begin{equation}
    \hat\theta(x) = \frac{1}{J} \sum_{j=1}^J \hat\theta^{[j]}(x) \,. 
\end{equation}
Cross-fitting ensures that the nuisance function estimation underlying the reconstruction of functional pseudo-outcomes for any given observation is always performed on an independent subset of the data. The functional pseudo-outcomes are computed for all observations by iterating this process across all $J$ folds. This orthogonalization effectively debiases the final F-CATE estimation and allows for full sample efficiency, even when highly flexible machine learning models are used for nuisance function estimation. 

\subsection{Theoretical guarantees and inference}
In addition to 
double robustness,
\FOCAL{} 
enjoys an oracle property if the regression of the difference between functional pseudo-outcomes on the covariates is \textit{stable}, as defined below.
\begin{definition}[Stability]
    Let $\hat{\mathbbmss{P}}^{[-j]}$ and $\mathbbmss{P}^{[j]}_n$ be two disjoint sub-samples obtained by randomly splitting observations with cross-fitting. Let:
    \begin{itemize}
    \item $\hat{f}^{[-j]}(\data)$ be an estimate of $f^\star(\data)$ based on data in $\hat{\mathbbmss{P}}^{[-j]}$.
    \item $\hat{b}^{[-j]}(x) = \EE{\hat{f}^{[-j]}(\data)-f^\star(\data)\mid \hat{\mathbbmss{P}}^{[-j]}, X=x}$ be the pointwise conditional bias of the estimator $\hat{f}^{[-j]}$.
    \item $\empEE{[j]}{\cdot\mid X=x}$
    be a generic estimator of $\EE{\cdot\mid X=x}$ based on data in ${\mathbbmss{P}}^{[j]}_n$.
    \end{itemize}
    The estimator $\empEE{[j]}{\cdot\mid X=x}$ is \textit{stable} at $X=x$ with respect to a distance metric $d$ if
    \begin{equation}
        \frac{\norm{\empEE{[j]}{\hat f(\data)\mid X=x} - \empEE{[j]}{f^\star(\data)\mid X=x} - \empEE{[j]}{\hat b^{[-j]}(X)\mid X=x}}}{\norm{\sqrt{\EE{\left(\empEE{[j]}{f^\star(\data)\mid X=x} - \EE{f^\star(\data)\mid X=x}\right)^2}}}} \pto 0\,,
    \end{equation}
    whenever $d\left(\hat f, f^\star\right)\pto 0$.
\end{definition}
The definition of stability of an estimator was originally introduced by \citet{kennedy2023towards} for one-dimensional outcomes;
our definition extends it
accounting for the infinite-dimensional nature of functional outcomes. Endowed with 
this definition, we
now state the
oracle 
property 
in the following theorem.
\begin{theorem}[Oracle property]
\label{th:oracle}
     Let $\hat{\mathbbmss{P}}^{[-j]}$ and $\mathbbmss{P}^{[j]}_n$ be two disjoint sub-samples obtained by randomly splitting observations with cross-fitting. Assume that:
    \begin{itemize}
    \item The
    estimator $\hat\theta^{[j]}(x) = \empEE{[j]}{\hat\gamma^{(1)}_{\hat\mu^{(1)[-j]};\hat\pi^{[-j]}}(\data) - \hat\gamma^{(0)}_{\hat\mu^{(0)[-j]};\hat\pi^{[-j]}}(\data) \mid X=x}$ is stable with respect to a distance $d$.
    \item $d\left(\hat\gamma^{(a)}_{\hat\mu^{(a)[-j]};\hat\pi^{[-j]}}(\data),{\gammatarget}^{(a)}(\data)\right)\pto 0$ for $a\in\{0,1\}$. 
    \end{itemize}
    Let $\Tilde{\theta}^{[j]}(x) = \empEE{[j]}{{\gammatarget}^{(1)}(\data) - {\gammatarget}^{(0)}(\data)\mid X=x}$ 
    be an oracle estimator that regresses the true difference in pseudo-outcomes on the covariates
    using data in $\mathbbmss{P}^{[j]}_n$, and denote its risk by
    \begin{equation}
        \Tilde{R}^{[j]}_n(x) = \norm{\sqrt{\EE{\left(\empEE{[j]}{{\gammatarget}^{(1)}(\data) - {\gammatarget}^{(0)}(\data)\mid X=x} - \EE{{\gammatarget}^{(1)}(\data) - {\gammatarget}^{(0)}(\data)\mid X=x}\right)^2}}}\,.
    \end{equation}
    Then 
    \begin{equation}
        \norm{\hat\theta^{[j]}(x) - \Tilde{\theta}^{[j]}(x)} = \norm{\empEE{[j]}{\hat b^{[-j]}(X)\mid X=x}} + o_\mathbbmss{P}\left(\Tilde{R}^{[j]}_n(x)\right)\,,
    \end{equation}
    where 
    \begin{equation}
        \hat b^{[-j]}(x) = \frac{\left(\hat\pi^{[-j]}(x) - \pitarget(x)\right)\left(\hat\mu^{(1)[-j]}(x) - {\mutarget}^{[1]}(x)\right)}{\hat\pi(x)} + \frac{\left(\hat\pi^{[-j]}(x) - \pitarget(x)\right)\left(\hat\mu^{(0)[-j]}(x) - {\mutarget}^{[0]}(x)\right)}{1- \hat\pi(x)}\,.
    \end{equation}
    If it further holds that $\norm{\empEE{[j]}{\hat b^{[-j]}(X)\mid X=x}} = o_\mathbbmss{P}\left(\Tilde{R}^{[j]}_n(x)\right)$, then $\hat\theta^{[j]}(x)$ is oracle efficient, i.e.~asymptotically equivalent to the oracle estimator $\Tilde{\theta}^{[j]}(x)$ in the sense that
    \begin{equation}
        \frac{\norm{\hat\theta^{[j]}(x) - \Tilde{\theta}^{[j]}(x)}}{\Tilde{R}^{[j]}_n(x)} \pto 0\,.
    \end{equation}
\end{theorem}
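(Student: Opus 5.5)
Following Kennedy's DR-Learner argument, I will split the error into a bias term plus a stability remainder. Write $\hat f = \hat\gamma^{(1)}_{\hat\mu^{(1)[-j]};\hat\pi^{[-j]}} - \hat\gamma^{(0)}_{\hat\mu^{(0)[-j]};\hat\pi^{[-j]}}$ and $f^\star = {\gammatarget}^{(1)} - {\gammatarget}^{(0)}$. Then $\hat\theta^{[j]}(x) = \empEE{[j]}{\hat f(\data)\mid X=x}$ and $\Tilde\theta^{[j]}(x) = \empEE{[j]}{f^\star(\data)\mid X=x}$. Adding and subtracting $\empEE{[j]}{\hat b^{[-j]}(X)\mid X=x}$ gives
\begin{equation*}
\hat\theta^{[j]}(x) - \Tilde\theta^{[j]}(x) = \empEE{[j]}{\hat b^{[-j]}(X)\mid X=x} + \Big(\empEE{[j]}{\hat f(\data)\mid X=x} - \empEE{[j]}{f^\star(\data)\mid X=x} - \empEE{[j]}{\hat b^{[-j]}(X)\mid X=x}\Big).
\end{equation*}
Taking $\Ltwo$ norms and using the triangle inequality in both directions, $\norm{\hat\theta^{[j]}-\Tilde\theta^{[j]}}$ equals $\norm{\empEE{[j]}{\hat b^{[-j]}\mid X=x}}$ up to the norm of the remainder. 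This remainder is exactly the numerator in the stability definition, and the denominator there is $\Tilde R^{[j]}_n(x)$.

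\textbf{Step 1: the remainder is $o_\mathbbmss{P}(\Tilde R^{[j]}_n(x))$.} This follows from stability, which is triggered by the hypothesis $d(\hat\gamma^{(a)},{\gammatarget}^{(a)})\pto 0$ for $a\in\{0,1\}$. I take $d$ to satisfy the triangle inequality, so that $d(\hat f,f^\star)\pto0$. If $d$ is not subadditive, I read the hypothesis as applying to the difference directly.

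\textbf{Step 2: identify the conditional bias in closed form.} This is the only real computation and the main point to get right. Conditional on the training fold $\hat{\mathbbmss{P}}^{[-j]}$, the nuisance estimates are fixed functions, by independence from cross-fitting. For the treated arm, iterated expectations give
\begin{equation*}
\EE{\hat\gamma^{(1)}(\data)\mid \hat{\mathbbmss{P}}^{[-j]}, X=x} = \hat\mu^{(1)}(x) + \frac{\pitarget(x)}{\hat\pi(x)}\big({\mutarget}^{(1)}(x)-\hat\mu^{(1)}(x)\big),
\end{equation*}
because $\EE{A\mathcal{Y}\mid X=x}=\pitarget(x){\mutarget}^{(1)}(x)$ by the definitions of the nuisance functions. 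The same calculation shows $\EE{{\gammatarget}^{(1)}(\data)\mid X=x}={\mutarget}^{(1)}(x)$. Subtracting gives $(\hat\pi-\pitarget)(\hat\mu^{(1)}-{\mutarget}^{(1)})/\hat\pi$. The control arm gives $(\pitarget-\hat\pi)(\hat\mu^{(0)}-{\mutarget}^{(0)})/(1-\hat\pi)$ by the symmetric computation with $1-A$.

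Combining the two arms yields the pointwise bias $\hat b^{[-j]}(x)$ as a function in $\Ltwo(\mathcal T)$. The sign on the second term is determined by this computation; the stated formula should be read with that convention, and the norm statement is unaffected in spirit. These are equalities of functions in $t$. I would justify exchanging conditional expectation and evaluation in $t$ via Bochner integrability, using $\mathcal{Y}\in\Ltwo$ with finite second moment and positivity of $\hat\pi$.

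\textbf{Step 3: the oracle-efficiency statement.} If $\norm{\empEE{[j]}{\hat b^{[-j]}\mid X=x}}=o_\mathbbmss{P}(\Tilde R^{[j]}_n(x))$, then both terms are $o_\mathbbmss{P}(\Tilde R^{[j]}_n)$. Dividing by $\Tilde R^{[j]}_n(x)$ then gives convergence in probability to zero.

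I expect the main obstacle to be the precise meaning of the first displayed conclusion. It is an equality of norms only up to $o_\mathbbmss{P}$, and it needs the two-sided triangle inequality $|\norm{u+v}-\norm{u}|\le\norm{v}$. Beyond that, the work is careful bookkeeping of conditioning on the independent training fold.
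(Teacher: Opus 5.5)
Your proposal is correct and follows the paper's proof essentially step for step: conditional bias via iterated expectations given the training fold, stability for the remainder, and the reverse triangle inequality to pass to norms. You are also more careful than the paper in deducing $d(\hat f,f^\star)\pto 0$ from the arm-wise hypotheses, though that step needs $d$ to be translation-invariant (e.g.\ norm-induced), not merely a metric satisfying the triangle inequality.

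No sign caveat is needed in Step~2. The bias of the difference is your treated-arm bias minus your control-arm bias $(\pitarget-\hat\pi)(\hat\mu^{(0)}-{\mutarget}^{(0)})/(1-\hat\pi)$, and that subtraction reproduces the stated second term $(\hat\pi-\pitarget)(\hat\mu^{(0)}-{\mutarget}^{(0)})/(1-\hat\pi)$ exactly.
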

The previous result is powerful, as it provides strong asymptotic estimation guarantees for \FOCAL{}. Moreover, it has actionable consequences for inferential purposes,
as articulated in the following proposition.
\begin{proposition}[Valid and simultaneous coverage]
    \label{th:inf}
    Let $\hat{\mathbbmss{P}}^{[-j]}$ and $\mathbbmss{P}^{[j]}_n$ be two disjoint sub-samples obtained by randomly splitting observations with cross-fitting. Assume:
    \begin{itemize}
    \item The 
    estimator $\hat\theta^{[j]}(x)$ is stable with respect to a distance $d$.
    \item $d\left(\hat\gamma^{(a)}_{\hat\mu^{(a)[-j]};\hat\pi^{[-j]}}(\data),{\gammatarget}^{(a)}(\data)\right)\pto 0$ for $a\in\{0,1\}$. 
    \item The oracle estimator $\Tilde{\theta}^{[j]}(x)$ has an asymptotic Gaussian process behavior, that is
        \begin{equation}
        \sqrt{n}\left(\Tilde{\theta}^{[j]}(x) - \thetatarget(x) \right) \dto \mathcal{GP}\left(0, \Sigma(x)\right)\,,
        \end{equation}
        where $\Sigma(x) = \VV{{\gammatarget}^{(1)}(\data) - {\gammatarget}^{(0)}(\data) \mid X=x}$.
    \end{itemize}
    Then the asymptotic distribution of $\hat{\theta}^{[j]}(x)$ matches the one of $\Tilde{\theta}^{[j]}(x)$ with $\hat\Sigma(x)$, where we define $\hat\Sigma(x) = \empEE{[j]}{\left({\hat\gamma}^{(1)}_{\hat\mu^{(1)[-j]};\hat\pi^{[-j]}}(\data) - {\hat\gamma}^{(0)}_{\hat\mu^{(0)[-j]};\hat\pi^{[-j]}}(\data)\right) \left({\hat\gamma}^{(1)}_{\hat\mu^{(1)[-j]};\hat\pi^{[-j]}}(\data) - {\hat\gamma}^{(0)}_{\hat\mu^{(0)[-j]};\hat\pi^{[-j]}}(\data)\right)^T \mid X=x}$.
\end{proposition}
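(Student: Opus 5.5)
The plan has three steps: reduce to the oracle estimator via Theorem~\ref{th:oracle}, transfer the Gaussian process limit with Slutsky's lemma, and then show that $\hat\Sigma(x)$ is a consistent plug-in for $\Sigma(x)$.

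\textbf{Reduction to the oracle.} The first two hypotheses of Proposition~\ref{th:inf} are exactly those of Theorem~\ref{th:oracle}. That theorem gives
\[
\norm{\hat\theta^{[j]}(x) - \Tilde{\theta}^{[j]}(x)} = \norm{\empEE{[j]}{\hat b^{[-j]}(X)\mid X=x}} + o_\mathbbmss{P}\bigl(\Tilde{R}^{[j]}_n(x)\bigr).
\]
I would also impose the product-rate condition $\norm{\empEE{[j]}{\hat b^{[-j]}(X)\mid X=x}} = o_\mathbbmss{P}(\Tilde{R}^{[j]}_n(x))$. This seems implicitly required, since without it a first-order bias term survives and the limiting distribution cannot match the oracle's. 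Under this condition Theorem~\ref{th:oracle} yields oracle efficiency, $\norm{\hat\theta^{[j]}(x) - \Tilde{\theta}^{[j]}(x)} = o_\mathbbmss{P}(\Tilde{R}^{[j]}_n(x))$.

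\textbf{Transferring the Gaussian limit.} The Gaussian process hypothesis implies that $\sqrt{n}\,(\Tilde{\theta}^{[j]}(x)-\thetatarget(x))$ is tight in $\Ltwo(\mathcal T)$. By the definition of $\Tilde R^{[j]}_n$ and uniform integrability, $\Tilde{R}^{[j]}_n(x) = O(n^{-1/2})$. This gives
\[
\sqrt{n}\,\norm{\hat\theta^{[j]}(x)-\Tilde\theta^{[j]}(x)} \pto 0 .
\]
Write
\[
\sqrt{n}\,(\hat\theta^{[j]}(x)-\thetatarget(x)) = \sqrt{n}\,(\Tilde\theta^{[j]}(x)-\thetatarget(x)) + \sqrt n\,(\hat\theta^{[j]}(x)-\Tilde\theta^{[j]}(x)).
\]
Slutsky's lemma in the separable Hilbert space $\Ltwo(\mathcal T)$ then gives $\sqrt{n}\,(\hat\theta^{[j]}(x)-\thetatarget(x))\dto\mathcal{GP}(0,\Sigma(x))$.

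\textbf{Consistency of $\hat\Sigma(x)$.} It remains to show $\hat\Sigma(x)\to\Sigma(x)$ in probability in Hilbert--Schmidt norm, so that the limit "with $\hat\Sigma(x)$" is asymptotically valid. I read $\hat\Sigma$ as centered, or note that the squared-mean term is consistent for $\thetatarget(x)\otimes\thetatarget(x)$ by the previous step. Set $\hat\Delta = \hat\gamma^{(1)}-\hat\gamma^{(0)}$ and $\Delta^\star=\gammatarget^{(1)}-\gammatarget^{(0)}$. Expand
\[
\hat\Delta\otimes\hat\Delta - \Delta^\star\otimes\Delta^\star = (\hat\Delta-\Delta^\star)\otimes\hat\Delta + \Delta^\star\otimes(\hat\Delta-\Delta^\star).
\]
Apply the stability of $\empEE{[j]}{\cdot\mid X=x}$ with $\hat f=\hat\Delta\otimes\hat\Delta$, which is the same stability notion applied to the tensor-valued pseudo-outcome. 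Combine this with $d(\hat\gamma^{(a)},\gammatarget^{(a)})\pto0$ and Cauchy--Schwarz under boundedness of $\hat\pi$ away from $0$ and $1$. Conditionally on $\hat{\mathbbmss P}^{[-j]}$ this gives $\empEE{[j]}{\hat\Delta\otimes\hat\Delta\mid X=x} - \empEE{[j]}{\Delta^\star\otimes\Delta^\star\mid X=x}\pto 0$. Consistency of the oracle second-moment regression, granted by the same regularity that delivers the Gaussian limit, then finishes the step.

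\textbf{Main obstacle.} I expect the hardest step to be the covariance consistency. Stability is only postulated for the first-moment regression, so extending it to second moments in operator norm needs either an additional assumption (stability of the second-stage learner for the squared pseudo-outcomes) or a direct bound. I would first try the direct route: bound $\norm{\hat\Delta-\Delta^\star}$ conditionally by $|\hat\pi-\pitarget|$ and $\norm{\hat\mu^{(a)}-\mutarget^{(a)}}$ terms, and use linearity-type continuity of the second-stage smoother, for example local averaging with nonnegative weights.
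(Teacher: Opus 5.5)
Your argument has the same skeleton as the paper's. The paper's proof consists only of the claim that stability and $d(\hat\gamma^{(a)},{\gammatarget}^{(a)})\pto 0$ give $\hat\Sigma(x)\pto\Sigma(x)$, followed by an appeal to Slutsky. You also carry out the step the paper never writes: transferring the Gaussian limit from $\Tilde\theta^{[j]}(x)$ to $\hat\theta^{[j]}(x)$ through Theorem~\ref{th:oracle}. That step is what shows the stated hypotheses do not suffice, and the conditions you add are necessary rather than artifacts of your route.

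\begin{itemize}
\item Nuisance estimators consistent only at rate $n^{-1/8}$ satisfy the second hypothesis. Yet they leave $\empEE{[j]}{\hat b^{[-j]}(X)\mid X=x}$ generically of order $n^{-1/4}$, which rules out the $\sqrt{n}$ limit.
\item The $\hat\Sigma(x)$ in the statement is an uncentered second moment, so its natural limit is $\Sigma(x)+\thetatarget(x)\otimes\thetatarget(x)$, not $\Sigma(x)$.
\item Stability constrains only the first-moment regression. It does not by itself control $\empEE{[j]}{\hat\Delta\otimes\hat\Delta\mid X=x}$, which is exactly the implication the paper asserts without argument.
\end{itemize}

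So the paper's version is shorter, but its hypotheses do not deliver its conclusion. Yours is a valid argument under an explicit list of extra conditions: negligible bias; $\hat\pi$ bounded away from $0$ and $1$; a metric $d$ that controls conditional $\Ltwo$ moments; consistency of the oracle second-moment regression; and either second-moment stability or a direct bound for a linear smoother with nonnegative weights. Note that consistency of the oracle second-moment regression is not implied by the first-moment Gaussian limit, since it involves fourth moments of the pseudo-outcomes.

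There are two places to tighten.

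\begin{itemize}
\item $\Tilde R^{[j]}_n(x)=O(n^{-1/2})$ is not implied by the Gaussian process hypothesis, because convergence in distribution does not bound $\EE{\norm{\Tilde\theta^{[j]}(x)-\thetatarget(x)}^2}$. List $\sup_n n\,\EE{\norm{\Tilde\theta^{[j]}(x)-\thetatarget(x)}^2}<\infty$ (or uniform integrability of $n\norm{\Tilde\theta^{[j]}(x)-\thetatarget(x)}^2$) among your added assumptions, rather than presenting it as derived.
\item Hilbert--Schmidt convergence of $\hat\Sigma(x)$ is not enough for $\mathcal{GP}(0,\hat\Sigma(x))$ to approximate $\mathcal{GP}(0,\Sigma(x))$ weakly in $\Ltwo(\mathcal T)$. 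Centered Gaussian measures on a separable Hilbert space converge weakly exactly when their covariance operators converge in trace norm. Your plan covers this at no cost: the trace norm of $u\otimes v$ is $\norm{u}\,\norm{v}$, so the Cauchy--Schwarz bounds on your rank-one expansion already control the trace norm. Just require trace-norm consistency of the oracle second-moment regression as well.
\end{itemize}
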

Thus, for any
given covariate profile $X=x$, we can provide valid and simultaneous confidence bands through a parametric bootstrap, as proposed by \citet{pini2017interval}. We repeatedly sample from the Gaussian process
and estimate quantiles
-- without any additional assumption on the covariance function. In symbols, the 
band takes the form $C_\alpha(x) = [\hat{\theta}_{\alpha/2}(x), \hat{\theta}_{1-\alpha/2}(x)]$, where $\hat{\theta}_{\alpha/2}(x)$ and $\hat{\theta}_{1-\alpha/2}(x)$ indicate the estimated $\alpha/2$ and $1-\alpha/2$ quantiles, respectively. Given its ease of implementation, and the fact that it requires no additional assumptions, we adopt
this parametric bootstrap approach for evaluating uncertainty in \FOCAL{} results.

\subsection{Simulation study}
To rigorously assess the performance of \FOCAL{}  in synthetic scenarios, we design a simulation experiment involving $n=5000$ independent observations with functional outcomes observed on a dense grid of $100$ equidistant 
points over the domain $\mathcal{T}=[0,1]$.

We first generate four baseline covariates $X_1, X_2, X_3, X_4$ from a standard multivariate normal distribution $\Normal{0}{I_4}$. Treatment assignment $A_i \in \{0,1\}$ is drawn from a Bernoulli distribution with propensity scores $\pitarget(X_1, X_2, X_3, X_4) = \text{logit}^{-1}(\eta(X_1, X_2, X_3, X_4))$, where the inverse logit function is defined as $\text{logit}^{-1}(x) = 1/(1+\exp{(-x)})$ and the linear function $\eta(X_1, X_2, X_3, X_4)= -X_1 + 0.5X_2 - 0.25X_3 - 0.1X_4$ creates a strong dependency on the baseline covariates. The functional regression functions are generated using a time-varying coefficient model, that is:
\begin{equation}
    {\mutarget}^{(a)}(X_1,X_2,X_3,X_4) =  \beta_0 + \sum_{j=1}^4 \beta_jX_j + \delta(a)\beta_5X_1 \,,
\end{equation}
where $\delta(a)=1$ if $a=1$ and 0 otherwise, and the $\beta_j$'s depend on $t \in \cal{T}$. Specifically, to ensure realistic smoothness and correlation structures, the coefficients $\beta_j$ are sampled from a Gaussian Process with Matérn covariance kernel defined, for any $t,s\in\mathcal{T}$, as:
\begin{equation}
\label{eq:matern}
        C(s,t) = \frac{\alpha^2}{\Gamma(\nu)2^{\nu-1}}\left(\frac{\sqrt{2\nu}}{l} \lvert s-t \rvert\right)^\nu \chi_\nu\left(\frac{\sqrt{2\nu}}{l} \lvert s-t \rvert \right)\,.
\end{equation}
Here $\chi_\nu$ is a modified Bessel function, and we set the parameters to $l=0.25$, $\nu=5.5$, $\alpha=2$ for $\beta_1,\dots,\beta_4$ and $\alpha=10$ for $\beta_5$. Then, for each observation $i=1,\dots,n$, we 
define potential outcomes as 
\begin{equation}
    \mathcal{Y}^{(a)}_i =  {\mutarget}^{(a)}(X_{i1},X_{i2},X_{i3},X_{i4}) + \varepsilon_i
    \,,\quad a\in\{0,1\}\,,
\end{equation}
where the noise term $\varepsilon$ is drawn from a Gaussian Process with Matérn covariance kernel ($l=0.25$, $\nu=5.5$, $\alpha=10$). Finally, for each observation $i=1,\dots,n$, the observed functional outcome is
\begin{equation}
    \mathcal{Y}_i = A_i \mathcal{Y}^{(1)}_i + (1-A_i) \mathcal{Y}^{(0)}_i\,.
\end{equation}
This construction induces a heterogeneous treatment effect driven by $X_1$ and $\beta_5$, resulting in a non-trivial F-CATE target $\thetatarget(x)$.

To control the degree of model misspecification, we also define the non-linear transformations $Z_1,Z_2,Z_3,Z_4$ of the covariates $X_1, X_2, X_3, X_4$ defined as 
\begin{equation}
Z_1 = \exp(X_1/2)\,,\quad
Z_2 = \frac{X_2}{1+\exp(X_1)} + 10\,,\quad
Z_3 = \biggl(\frac{X_1 X_3}{25}+0.6\biggr)^3\,,\quad
Z_4 = (X_2 + X_4 + 20)^2\,.
\end{equation}
An analyst who fits a linear model for the outcome or a logistic model for the propensity score using the covariates $Z_1,Z_2,Z_3,Z_4$ will have a misspecified model. A correct specification would require the analyst to know the true latent variables $X_1, X_2, X_3, X_4$ or the exact inverse transformations. In all cases, 5-fold cross-fitting is used. We thus consider four scenarios:
\begin{enumerate}
    \item Correct specification: both nuisance models are correctly specified;
    \item Misspecified regression function: only $\hat\mu^{(a)}$ is misspecified by injecting the non-linear transformation through $Z_1,Z_2,Z_3,Z_4$;
    \item Misspecified propensity score: only $\hat\pi$ is misspecified 
    by injecting the non-linear transformation through $Z_1,Z_2,Z_3,Z_4$;
    \item Fully misspecified: $\hat\mu^{(a)}$ and $\hat\pi$ are both misspecified.
\end{enumerate}

This design allows us to empirically assess the double robustness property of \FOCAL{}: consistency is preserved if at least one nuisance model is correctly specified, and deteriorates only under joint misspecification. 

We repeat the simulation experiment described above 5000 times. For each scenario, we compute the aggregated root mean squared error (ARMSE) between the estimated functional treatment effect $\hat{\theta}$ and the ground truth $\thetatarget(x)$, where~$\text{ARMSE}(\hat\theta,\thetatarget) = \int_x\norm{\hat\theta(x) - \thetatarget(x)}\,dx$. This metric provides a direct and interpretable measure of estimation accuracy across the covariate space, as well as time through the $\mathbbmss{L}_2$ norm.

\subsection{SHARE study}
SHARE ({\em Survey of Health, Aging and Retirement in Europe}) is a research infrastructure that aims to investigate the effects of health, social, economic
and environmental policies on the life course of European citizens \citep{borsch2013data, bergmann2017survey, borsch2020survey}. 
We preprocess data following the steps described in \citet{boschi2024new}. We focus on the 1518 subjects who participated in at least seven out of the eight waves (i.e., survey times).
We investigate a subset of the variables from the EasySHARE dataset \citep{gruber2014generating}, a preprocessed version of the SHARE data. While some of these variables are characterized by values that change over time (e.g., CASP and mobility index) and are suitable for a functional representation, others are scalar (e.g., education years) or categorical (e.g., gender) and do not evolve across waves. 
We smooth time-varying variables using cubic \emph{B-splines} with knots at each survey date and roughness penalty on the curves second derivative \citep{ramsay2005}. For each curve, the smoothing parameter is selected by minimizing the average generalized cross-validation error \citep{craven1978smoothing}. 
Note that, although survey dates and number of measurements may vary across subjects, creating a functional representation provides a natural imputation for missing values and facilitates the comparison of different statistical units across the entire temporal domain.

For this application, \FOCAL{} 
employs neural networks for 
the regression nuisance function and the pseudo-outcome regression, 
and a logistic regression for the propensity score nuisance function. For the regression nuisance function, we opt for a two-layer neural network (10 neurons per layer) with adaptive learning rate and 50k iterations.
For the pseudo-outcome regression
we opt for a simpler, one-layer neural network with 5 neurons to encourage smoothness. A small Ridge penalization ($1e^{-3}$) is added for regularization. We employ cross-fitting with 5 balanced folds.

\subsection{COVID-19 study}
This study 
investigates mortality data at the provincial level for the two pre-vacine COVID 19 waves that occurred in Italy 
in 2020-2021. Specifically, we consider differential mortality for 107 provinces, computed for a total of 300 longitudinal points ($150$ days for each wave), as provided by \citet{boschi2023contrasting}. In addition to mortality, we consider  
a set of covariates
describing some key
socio-economic, demographic and infrastructural indicators for each province.
To define our treatment $A$, we binarize 
one such covariate; namely, the ratio between adults and family doctors, as observed for each
province in 2019.
Using the national median as a cut point, we consider as ``treated'' provinces above the median (poor distributed primary health care; $A=1$) and as ``untreated'' provinces below the median (better distributed primary health care; $A=0$).

Estimates of the nuisance functions and the pseudo-outcome regression are obtained as in the
SHARE 
application (neural networks for the regression nuisance function and the pseudo-outcome regression, and logistic regression for the propensity score nuisance function). Here though both neural networks are three-layered -- the former with 20 neurons per layer and the latter with 10 neurons per layer, both with ``relu'' activation function and Ridge penalization. Again, \FOCAL{} is run via a 5-fold cross-fit.

To obtain FATE estimates, we take the average over the pseudo-outcome curves, trimming the top 1\% observations showing the largest norms. To compute $\hat\Sigma$ as in Proposition \ref{th:inf}, which is required to build confidence bands of F-CATE surface cuts, we employ a multi-output random forest regression model (100 trees, 200 minimum samples per leaf). Evaluating the fitted model at any specific $x$ yields $\hat{\Sigma}(x)$.

\section*{Data availability}
Due to data-sharing restrictions, we cannot release the data used in our SHARE application. However, access to data from the SHARE study can be freely requested through the official portal at \url{https://share-eric.eu/data}. Data
for our COVID-19 application are available at \url{https://github.com/tobiaboschi/fdaCOVID2}. 

\section*{Code availability}
The implementation code for \FOCAL{}, as well as the code needed to reproduce the simulation study and the applications presented in this paper, is available on GitHub at \url{https://github.com/testalorenzo/FOCaL}.

\section*{Acknowledgments}
L.T.~wishes to thank Edward Kennedy and the Causal Inference Reading Group at Carnegie Mellon for helpful discussions. The work of F.C.~was partially supported by the Huck Institutes of the Life Sciences at Penn State, the L'EMbeDS Department of Excellence of the Sant'Anna School of Advanced Studies, and the SMaRT COnSTRUCT project (CUP J53C24001460006, as part of FAIR, PE0000013, CUP B53C22003630006, Italian National Recovery and Resilience Plan funded by NextGenerationEU).

\section*{Author contributions}
All authors conceived ideas and designed the proposed method. F.S.~and L.T.~retrieved and processed data and implemented pipelines. All authors participated to the writing of the manuscript. F.C.~supervised the research.

\section*{Competing interests}
The authors declare no competing interests.

\bibliographystyle{plainnat}
\bibliography{bib}

\clearpage
\setcounter{page}{1}
\appendix
\section*{Supplementary Information}

\renewcommand\thefigure{\thesection.\arabic{figure}}
\renewcommand\thetable{\thesection.\arabic{table}}
\setcounter{figure}{0} 
\setcounter{table}{0}

\section{Technical Lemmas}
\begin{lemma}
\label{lemma:absolute_prob}
    If $| X_n | \pto 0$, then $X_n \pto 0$.
\end{lemma}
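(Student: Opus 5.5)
The plan is to unwind the definition of convergence in probability and use the fact that $|X_n|$ and $X_n$ have the same distance to $0$. Recall that $X_n \pto 0$ means that for every $\epsilon>0$, $\PP{|X_n - 0| > \epsilon}\to 0$ as $n\to\infty$.

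First, I will fix an arbitrary $\epsilon>0$. The hypothesis $|X_n|\pto 0$ gives $\PP{\bigl||X_n| - 0\bigr| > \epsilon}\to 0$.

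Second, I will note the identity $\bigl||X_n|\bigr| = |X_n|$. This holds pointwise on the sample space, so the events $\{\bigl||X_n|\bigr|>\epsilon\}$ and $\{|X_n|>\epsilon\}$ coincide. Hence
\begin{equation*}
\PP{|X_n - 0|>\epsilon} = \PP{\bigl||X_n|-0\bigr|>\epsilon}\to 0 .
\end{equation*}
Since $\epsilon$ was arbitrary, this is exactly $X_n\pto 0$.

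The same argument applies verbatim if $|\cdot|$ denotes a norm, for instance $\norm{\cdot}$ on $\Ltwo(\mathcal{T})$, and $X_n$ is Hilbert-space valued. In that case $\norm{X_n}$ is a real random variable with $\bigl|\norm{X_n}\bigr| = \norm{X_n}$, and convergence of $X_n$ to $0$ in probability is defined through $\PP{\norm{X_n}>\epsilon}\to0$.

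There is no genuine obstacle here. The only point requiring care is to be explicit about which metric defines convergence in probability for $X_n$, and to check that it is the same quantity appearing in the hypothesis.
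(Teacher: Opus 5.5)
Your proof is correct and follows the same route as the paper: fix $\varepsilon>0$ and note that the events $\{\bigl||X_n|-0\bigr|\geq\varepsilon\}$ and $\{|X_n-0|\geq\varepsilon\}$ coincide, so their probabilities are equal. Using $>$ instead of the paper's $\geq$ is an equivalent form of the definition, and your extension to norm-valued $X_n$ is a valid generalization.
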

\begin{proof}
    We want to show that
    \begin{equation}
        \lim_{n\to\infty} \PP{|X_n - 0| \geq \varepsilon} = 0\,,
    \end{equation}
    where $\varepsilon$ is a positive constant. By definition of convergence in probability of $|X_n|$ to 0, we have that
    \begin{equation}
         \lim_{n\to\infty} \PP{||X_n| - 0| \geq \varepsilon} = 0\,,
    \end{equation}
    where $\varepsilon$ is a positive constant. Now it is easy to notice that
    \begin{equation}
        \PP{||X_n| - 0| \geq \varepsilon}  = \PP{|X_n - 0| \geq \varepsilon}\,,
    \end{equation}
    from which the result follows.
\end{proof}

\section{Proofs of main statements}
\subsection{Proof of Lemma~\ref{th:equivalence}}
\begin{proof}
We start by showing that $\EE{\mathcal{Y}^{(1)}\mid X=x} = \EE{{\gammatarget}^{(1)}(\data) \mid X=x}$. This is done exploiting the identifiability Assumptions~\ref{ass:identify}. Indeed
\begin{equation}
    \begin{split}
        \EE{\gamma^{(1)}(\data)\mid X=x} &=\EE{{\mutarget}^{(1)}(X) + \frac{A}{\pitarget(X)} \left(\mathcal{Y} - {\mutarget}^{(1)}(X) \right) \mid X=x} \\
        &= \EE{{\mutarget}^{(1)}(X) \mid X=x} + \EE{\frac{A}{\pitarget(X)} \left(\mathcal{Y} - {\mutarget}^{(1)}(X) \right) \mid X=x} \\
        &= \EE{\EE{\mathcal{Y}\,|\,X,A=1}\mid X=x} + \EE{\frac{A}{\pitarget(X)} \left(\mathcal{Y} - {\mutarget}^{(1)}(X) \right) \mid X=x} \\
        &= \EE{\EE{\mathcal{Y}\,|\,X,A=1}\mid X=x} \\
        &= \EE{\mathcal{Y}^{(1)}\mid X=x}\,.
    \end{split}
\end{equation}
A similar computation leads to $\EE{\mathcal{Y}^{(0)}\mid X=x} = \EE{{\gammatarget}^{(0)}(\data)\mid X=x}$. Combining the two results implies the desired equality.
\end{proof}

\subsection{Proof of Theorem~\ref{th:oracle}}
\begin{proof}
    First, we show the conditional bias decomposition. By iterated expectations, it holds that
    \begin{equation}
        \begin{split}
            \hat b^{[-j]}(x) & = \left(\frac{\pitarget(x)}{\hat\pi^{[-j]}(x)}- 1 \right) \left({\mutarget}^{(1)} - \hat\mu^{(1)[-j]}(x) \right) - \left(\frac{1 - \pitarget(x)}{1- \hat\pi^{[-j]}(x)}- 1 \right) \left({\mutarget}^{(0)} - \hat\mu^{(0)[-j]}(x) \right) \\
            &= \frac{\left(\hat\pi^{[-j]}(x) - \pitarget(x)\right)\left(\hat\mu^{(1)[-j]}(x) - {\mutarget}^{[1]}(x)\right)}{\hat\pi(x)} + \frac{\left(\hat\pi^{[-j]}(x) - \pitarget(x)\right)\left(\hat\mu^{(0)[-j]}(x) - {\mutarget}^{[0]}(x)\right)}{1- \hat\pi(x)}\,.
        \end{split}
    \end{equation}    
    Now, the assumption of stability implies that 
    \begin{equation}
        \frac{\norm{\hat\theta^{[j]}(x) - \Tilde{\theta}^{[j]}(x) - \empEE{[j]}{\hat b^{[-j]}(X)\mid X=x}}}{\Tilde{R}^{[j]}_n(x)} \pto 0\,.
    \end{equation}
    By reverse triangle inequality, we know that 
    \begin{equation}
        \norm{\hat\theta^{[j]}(x) - \Tilde{\theta}^{[j]}(x) - \empEE{[j]}{\hat b^{[-j]}(X)\mid X=x}} \geq \left| \norm{\hat\theta^{[j]}(x) - \Tilde{\theta}^{[j]}}(x) - \norm{\empEE{[j]}{\hat b^{[-j]}(X)\mid X=x}} \right|\,,
    \end{equation}
    which by Squeeze Theorem in turn implies
    \begin{equation}
        \frac{\left| \norm{\hat\theta^{[j]}(x) - \Tilde{\theta}^{[j]}(x)} - \norm{\empEE{[j]}{\hat b^{[-j]}(X)\mid X=x}} \right|}{\Tilde{R}^{[j]}_n(x)} \pto 0\,.
    \end{equation}
    Combined with Lemma~\ref{lemma:absolute_prob} and consistency, this can be written as
    \begin{equation}
        \norm{\hat\theta^{[j]}(x) - \Tilde{\theta}^{[j]}(x)} = \norm{\empEE{[j]}{\hat b^{[-j]}(X)\mid X=x}} + o_\mathbbmss{P}\left(R^{[j]}_n(x) \right)\,.
    \end{equation}
    If $\norm{\empEE{[j]}{\hat b^{[-j]}\mid X=x}} = o_\mathbbmss{P}\left(\Tilde{R}^{[j]}_n(x)\right)$, the result follows.
\end{proof}

\subsection{Proof of Proposition~\ref{th:inf}}
\begin{proof}
    Stability and $d\left(\hat\gamma^{(a)}_{\hat\mu^{(a)[-j]};\hat\pi^{[-j]}}(\data),{\gammatarget}^{(a)}(\data)\right)\pto 0$ for $a\in\{0,1\}$ together imply that $\hat\Sigma(x) \pto \Sigma(x)$. Therefore, by Slutsky theorem, the result follows.
\end{proof}

\section{Additional details on simulations}

\begin{figure}[h]
    \centering
    \includegraphics[width=0.5\linewidth]{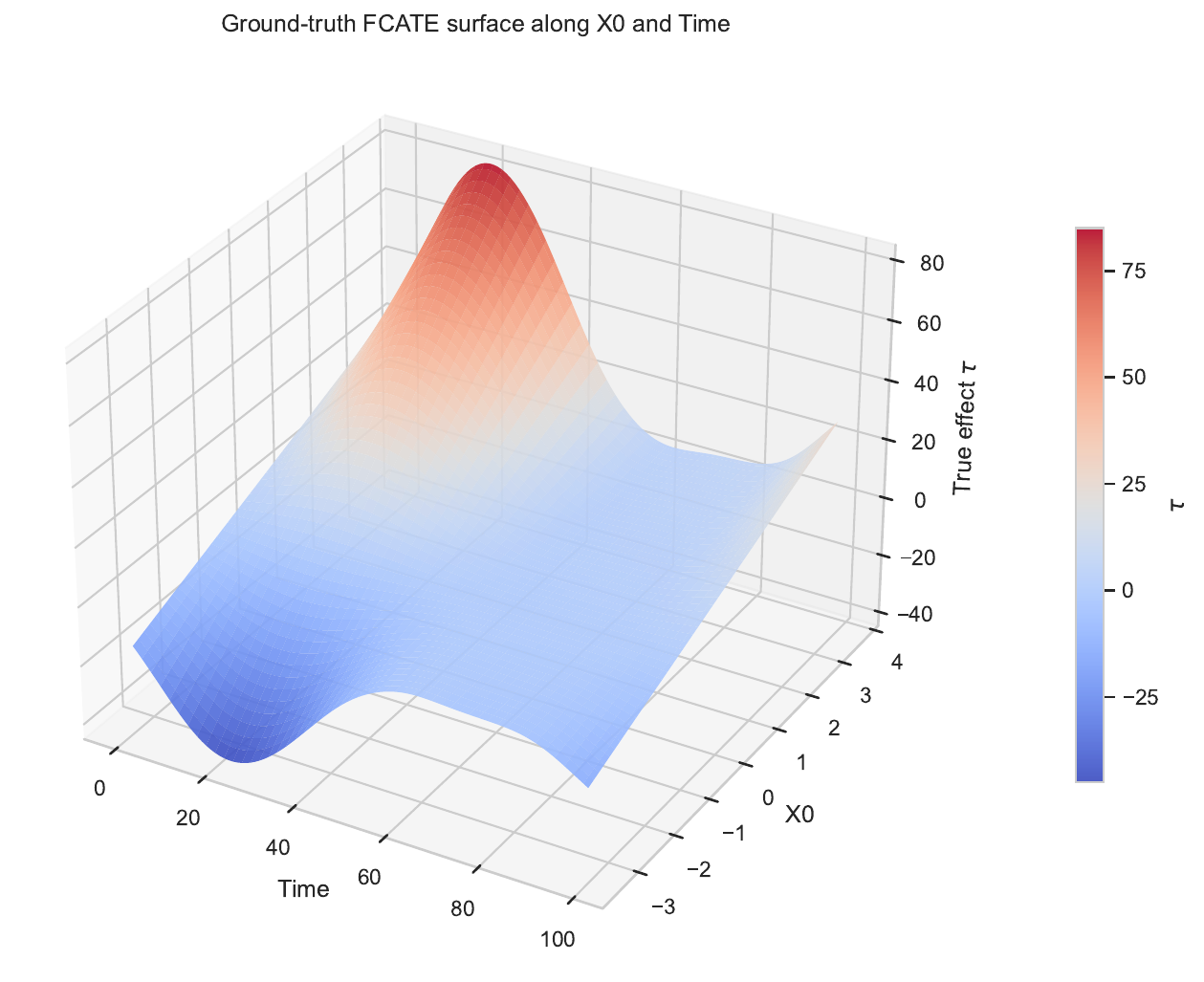}
    \caption{Ground truth F-CATE surface for simulation study. This surface represents the true Functional Conditional Average Treatment Effect (F-CATE) used as the target for one run of the simulation experiments. It is visualized as a function of the functional domain (Time) and the primary covariate of interest ($X_0$), illustrating the complex, non-linear heterogeneity the \FOCAL{} framework is designed to recover.}
    \label{fig:true_surface}
\end{figure}

\clearpage
\section{Additional details on real-data applications}

\subsection{SHARE study}

\begin{table}[h]
    \centering
    \caption{Variables employed in the SHARE application study.}
    \label{tab:covariates}
    \begin{tabular}{lll}
        \toprule
        \textbf{Variable name} & \textbf{Type} & \textbf{Usage} \\
        \midrule
        Hypertension condition & Binary & Treatment \\
        High cholesterol & Binary & Treatment \\
        Age & Scalar & Covariate \\
        Years of education & Scalar & Covariate \\
        Number of children & Scalar & Covariate \\
        Gender & Binary & Covariate \\
        Smoke & Binary & Covariate \\
        Vaccinations during childwood & Binary & Covariate \\
        Mobility index & Functional & Outcome \\
        CASP & Functional & Outcome \\
        \bottomrule
    \end{tabular}
\end{table}

\begin{figure}[h]
    \centering
    \includegraphics[width=0.5\linewidth]{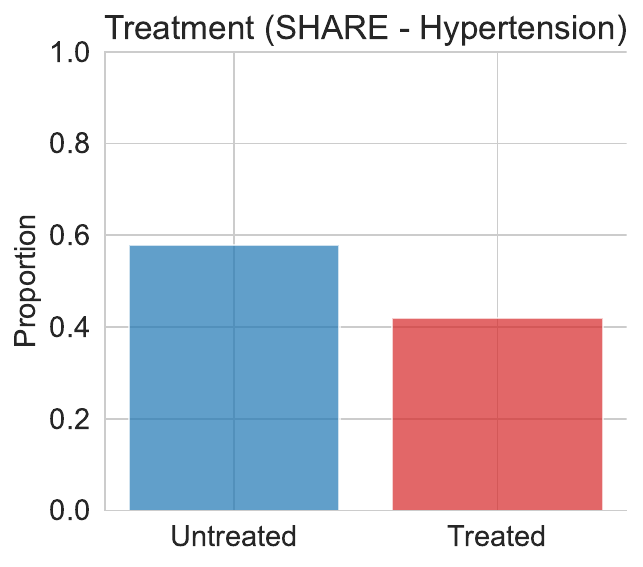}
    \caption{Distribution of treatment status for the SHARE hypertension study. Bar chart showing the proportion of subjects who presented with hypertension at the beginning of the study (Treated, $n=419$) compared to those who remained healthy throughout the observation window (Untreated, $n=577$).}
    \label{fig:thyp}
\end{figure}

\begin{figure}[h]
    \centering
    \includegraphics[width=0.5\linewidth]{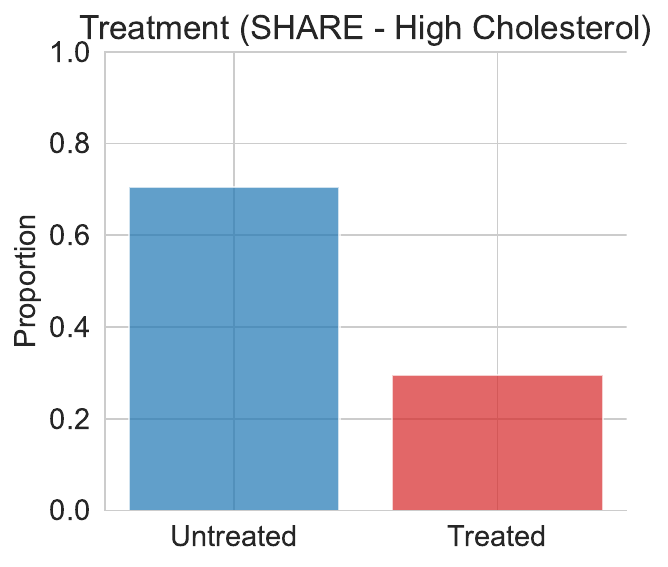}
    \caption{Distribution of treatment status for the SHARE high cholesterol study. Bar chart illustrating the proportion of subjects in the treated (high cholesterol, $n=313$) and untreated ($n=747$) cohorts for the secondary SHARE application.}
    \label{fig:thc}
\end{figure}

\begin{figure}[h]
    \centering
    \includegraphics[width=\linewidth]{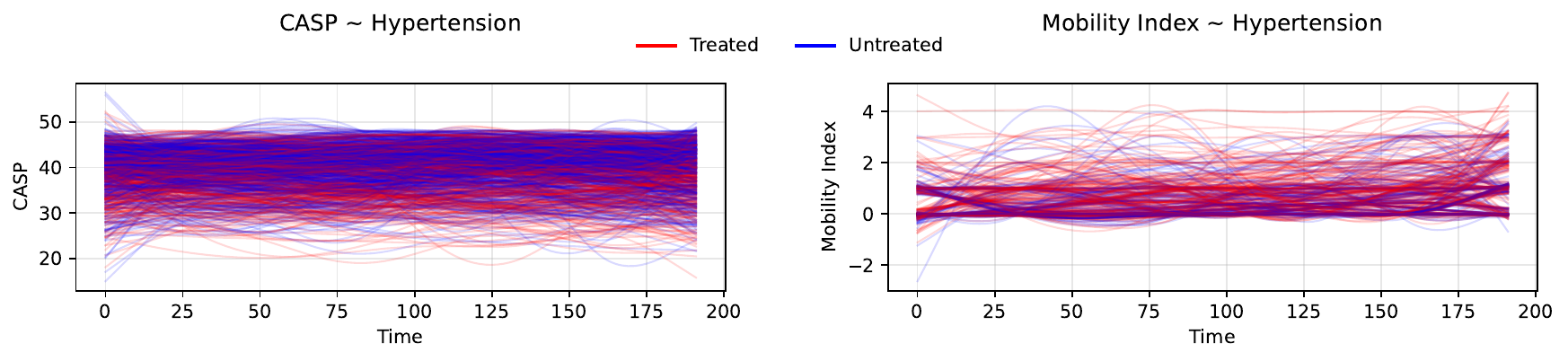}
    \caption{Observed CASP and mobility index functional outcome trajectories for the hypertension cohort (SHARE).}
    \label{fig:outhyp}
\end{figure}

\begin{figure}[h]
    \centering
    \includegraphics[width=\linewidth]{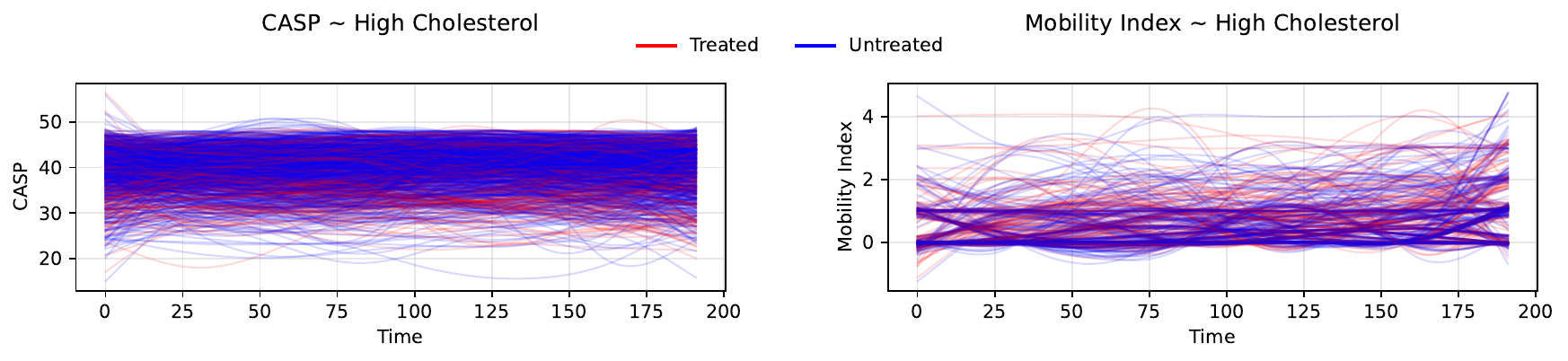}
    \caption{Observed CASP and mobility index functional outcome trajectories for the high cholesterol cohort (SHARE).}
    \label{fig:outchol}
\end{figure}

\begin{figure}[h]
    \centering
    \includegraphics[width=\linewidth]{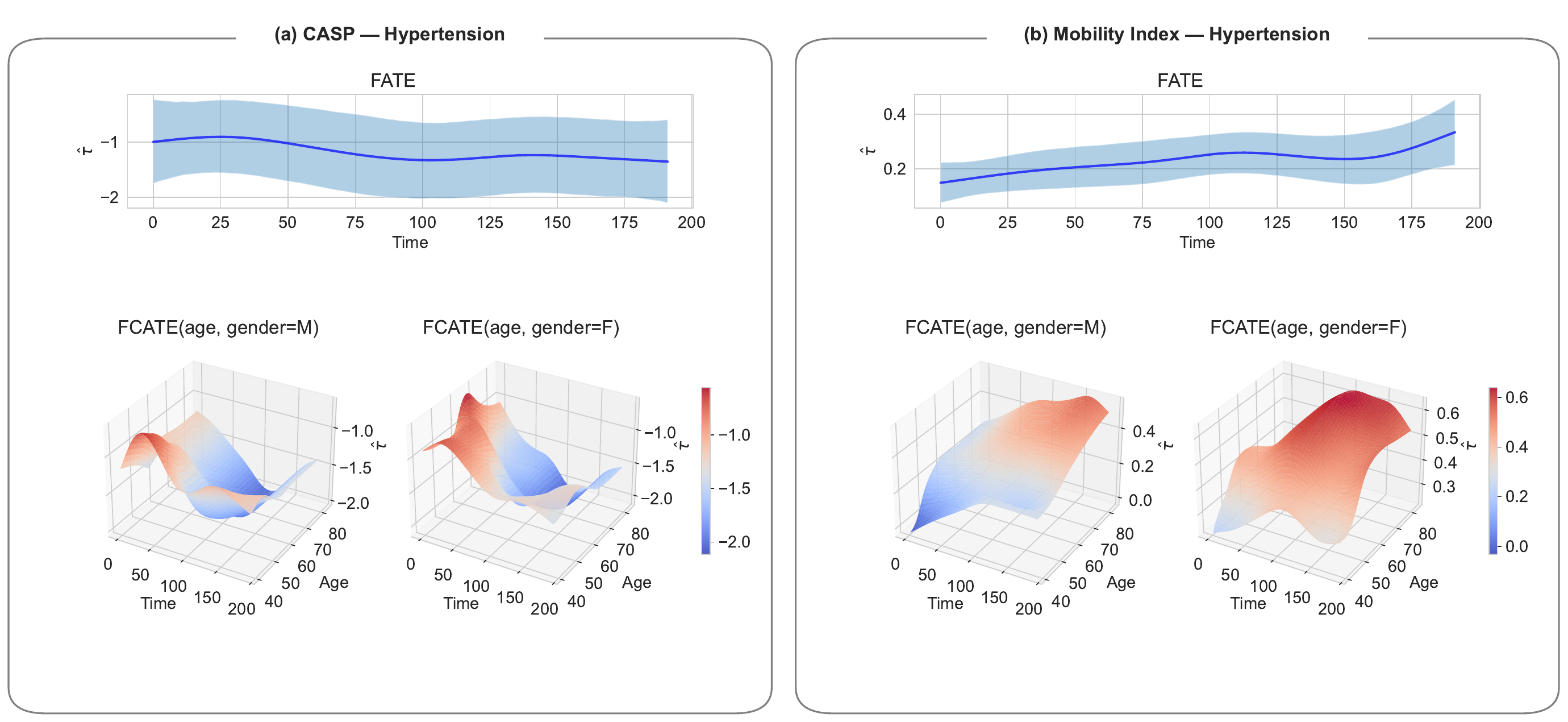}
    \caption{Additional results for the impact of hypertension on quality of life. Compared to Figure \ref{fig:share}, here we set the conditioning variable education to a lower quantile (0.2), keeping all other binary and scalar covariates at their modal and mean values, respectively. }
    \label{fig:edu}
\end{figure}

\begin{figure}[h]
    \centering
    \includegraphics[width=\linewidth]{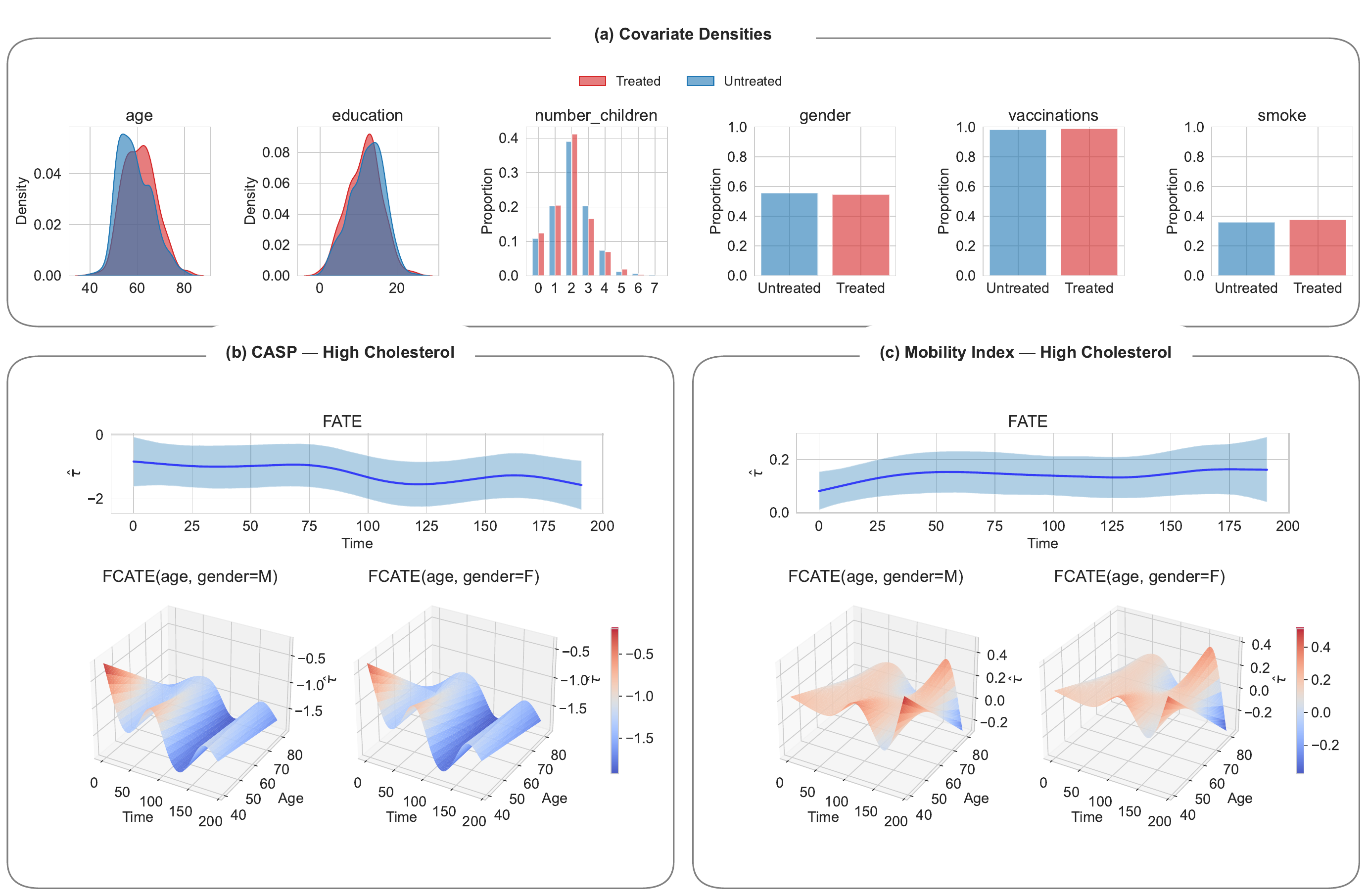}
    \caption{Heterogeneous impact of high cholesterol on quality of life trajectories (SHARE study). Panels can be interpreted as in Figure \ref{fig:share}.
    Note: the plots for CASP and the Mobility Index have response-specific vertical scales (the two outcomes are measured on different scales; see Supplementary Figure \ref{fig:outchol}).}
    \label{fig:chol}
\end{figure}

\clearpage
\subsection{COVID-19 study}

\begin{table}[h]
    \centering
    \caption{Variables employed in the COVID-19 application study.}
    \label{tab:covariates_covid}
    \begin{tabular}{lll}
        \toprule
        \textbf{Variable name} & \textbf{Type} & \textbf{Usage} \\
        \midrule
        Distributed primary health care (number of adults per family doctor) & Binary & Treatment \\
        Area before & Scalar & Covariate \\
        Percentage of population over 65 & Scalar & Covariate \\
        Average beds per hospital & Scalar & Covariate \\
        Average students per classroom & Scalar & Covariate \\
        Average employees per firm & Scalar & Covariate \\
        PM10 & Scalar & Covariate \\
        Mortality rate & Functional & Outcome \\        \bottomrule
    \end{tabular}
\end{table}

\begin{figure}[h]
    \centering
    \includegraphics[width=0.5\linewidth]{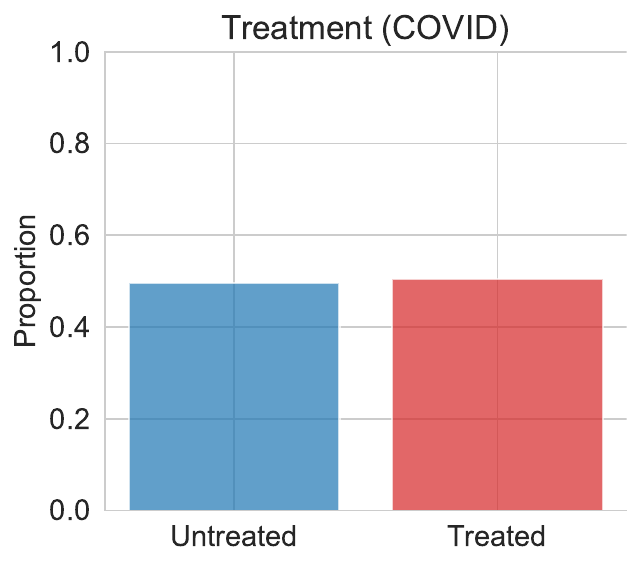}
    \caption{Distribution of treatment status for the COVID-19 study. Proportion of Italian provinces classified as ``Treated'' (poorer distributed primary health care, defined as being above the national median for the ratio of adults to family doctors, $n=54$) and ``Untreated'' (better distributed primary care, $n=53$).}
    \label{fig:tcov}
\end{figure}

\begin{figure}[h]
    \centering
    \includegraphics[width=\linewidth]{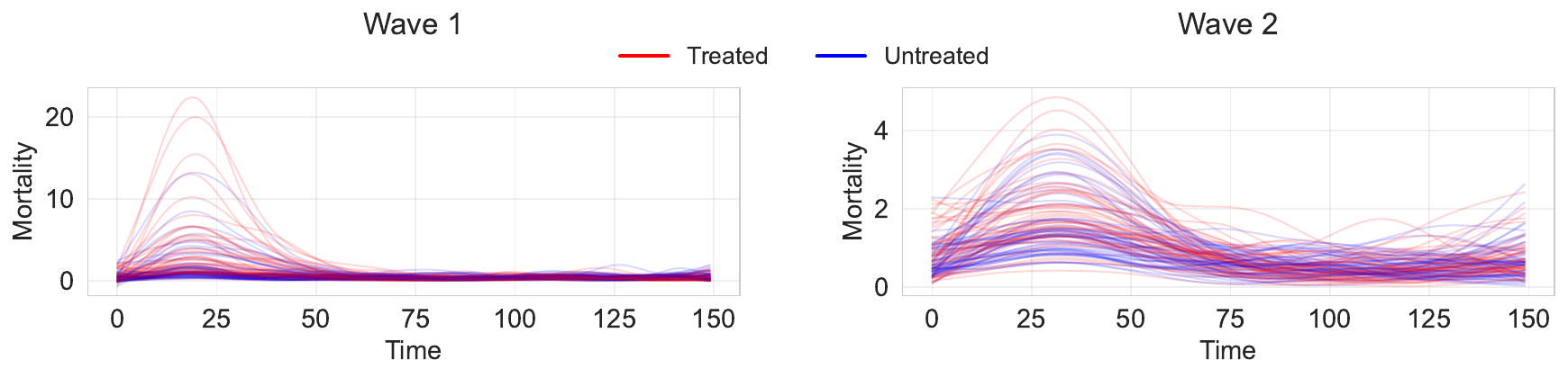}
    \caption{Observed COVID-19 mortality functional outcome trajectories during the first and second waves.}
    \label{fig:outcovid}
\end{figure}

\begin{figure}[h]
    \centering
    \includegraphics[width=\linewidth]{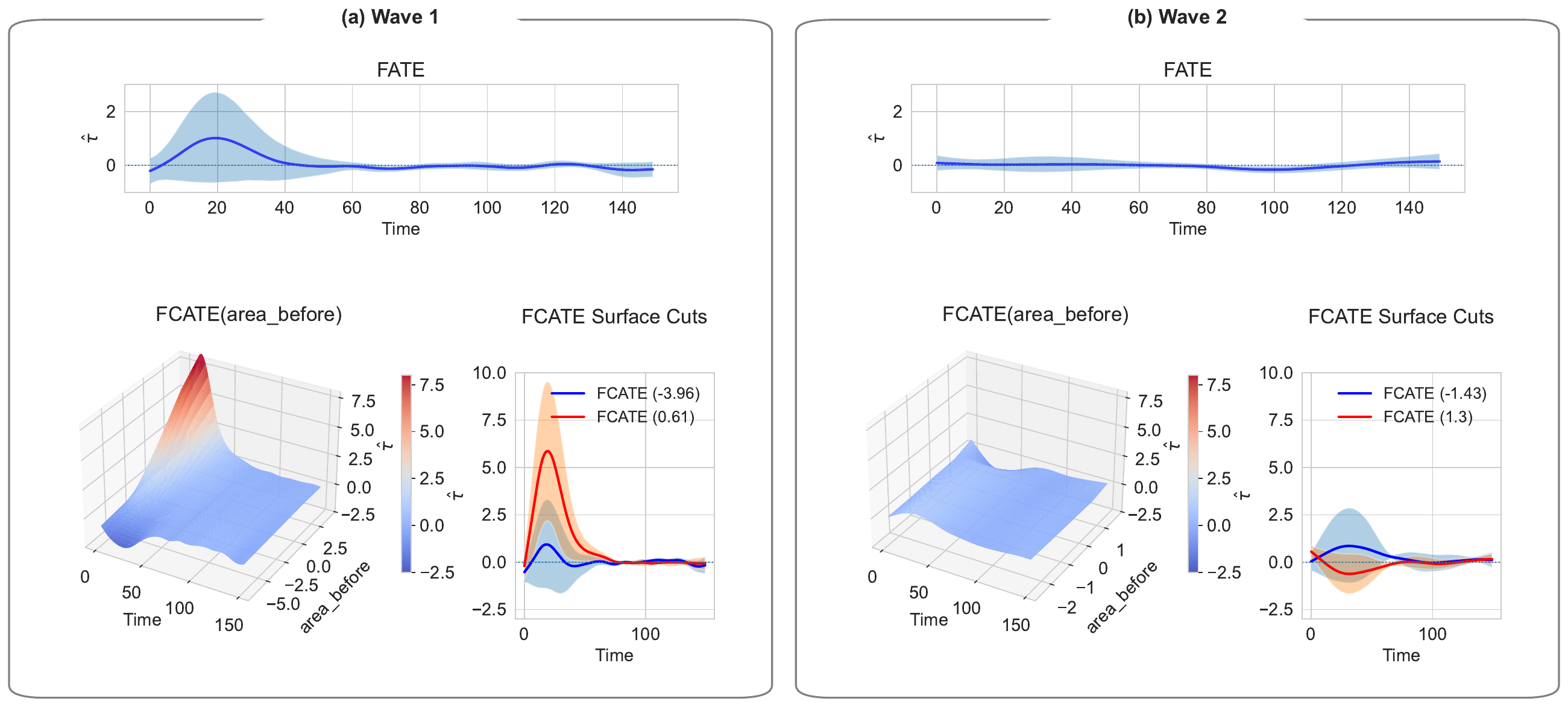}
    \caption{Additional results on the heterogeneous impact of availability of primary health care on COVID-19 mortality patterns (COVID-19 study). Compared to Figure \ref{fig:covid}, here we set the conditioning variables to a higher quantile (0.6) as a robustness check.}
    \label{fig:eighthq}
\end{figure}

\begin{figure}[h]
    \centering
    \includegraphics[width=\linewidth]{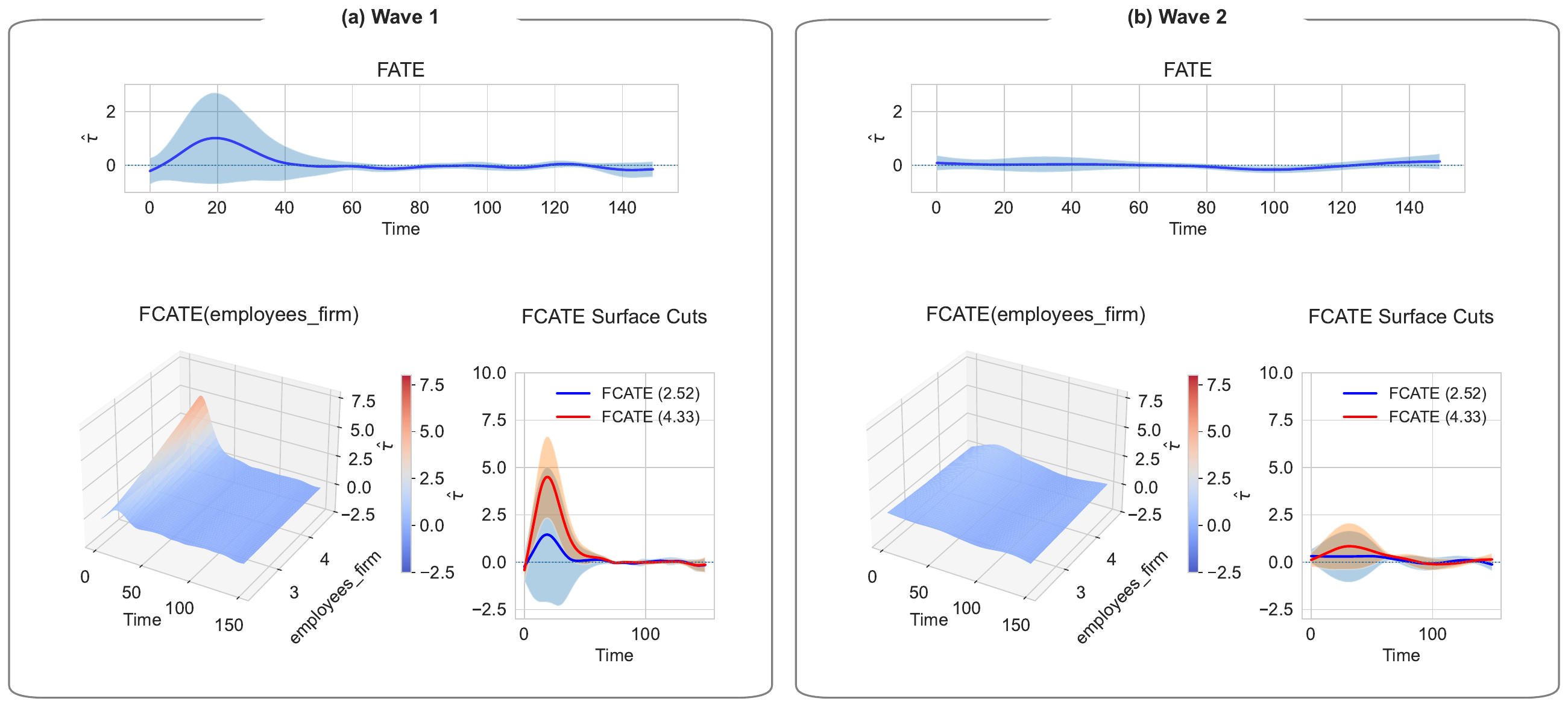}
    \caption{F-CATE analysis of COVID-19 mortality conditioned on workplace. Panels can be interpreted as in Figure \ref{fig:covid}.}
    \label{fig:employees}
\end{figure}

\end{document}